\newtheorem{theorem}{Theorem}
\newtheorem{lemma}{Lemma}
\newtheorem{remark}{Remark}
\theoremstyle{definition}
\newtheorem{definition}{Definition}
\newcommand{\sset}[1]{\left\{ #1\right\}}
\newcommand{\ssets}[1]{\{ #1\}}
\newcommand{\fwh}[1]{\; \left| \; #1 \right.}
\newcommand{\card}[1]{\left| #1 \right|}
\newcommand{\prob}[1]{\ensuremath{\mathrm{Pr}\left[#1\right]}}
\DeclarePairedDelimiter{\ceil}{\lceil}{\rceil}
\newcommand{\union}{\cup}
\DeclareMathOperator*{\expectation}{\mathbb E}
\newcommand{\expect}[1]{\ensuremath{\expectation\left[#1\right]}}
\newcommand{\expectsmall}[1]{\ensuremath{\expectation [#1]}}
\newcommand{\vecc}[1]{\ensuremath{\mathbf{#1}}}
\newcommand{\opt}{\ensuremath{\mathrm{OPT}}}
\newcommand{\fifo}{{\mathcal{M}_{on}}}
	\newcommand{\optm}{{\mathcal{M}}}
\title{Online Market Intermediation\thanks{Supported by the ERC Advanced 
Grant 321171 (ALGAME) and by EPSRC (award reference 1493310).}}
\author{
		Yiannis Giannakopoulos
		\thanks{Department of Informatics, 
			Technical University of Munich. Email: 
			\texttt{giannako@in.tum.de}} 
	\and 
		Elias Koutsoupias
		\thanks{Department of Computer Science, University of Oxford. Email: 
			\texttt{\{elias,filippos.lazos\}@cs.ox.ac.uk}}
	\and
		Philip Lazos\footnotemark[3]
}
\date{March 17, 2017}
\begin{document}

\maketitle

\begin{abstract}
	We study a dynamic market setting where an intermediary interacts with 
	an unknown
	large sequence of agents that can be either sellers or buyers: their 
	identities, as
	well as the sequence length $n$, are decided in an adversarial, online 
	way. 
	Each
	agent is interested in trading a single item, and all items in the market are
	identical. The intermediary has some prior, incomplete knowledge of the 
	agents'
	values for the items: all seller values are independently drawn from the 
	same
	distribution $F_S$, and all buyer values from $F_B$. The two 
	distributions 
	may
	differ, and we make standard regularity assumptions, namely that $F_B$ 
	is 
	MHR and
	$F_S$ is log-concave.
	
	We focus on online, posted-price mechanisms, and analyse two
	objectives: that of maximizing the intermediary's profit and that of 
	maximizing the
	social welfare, under a competitive analysis benchmark. First, on the 
	negative side,
	for general agent sequences we prove tight competitive ratios of
	$\varTheta(\sqrt{n})$ and $\varTheta(\ln n)$, respectively for the two 
	objectives.
	On the other hand, under the extra assumption that the intermediary 
	knows some bound
	$\alpha$ on the ratio between the number of sellers and buyers, we  
	design
	asymptotically optimal online mechanisms with competitive ratios of 
	$1+o(1)$ and
	$4$, respectively. Additionally, we study the model 
	were the number of items that can be stored in stock throughout the 
	execution is bounded, in which case the competitive ratio for the profit is 
	improved to $O(\ln n)$.
\end{abstract}

\section{Introduction}
The design and analysis of electronic markets is of central importance
in algorithmic game theory. Of particular interest are trading
settings, where multiple parties such as buyers, sellers, and
intermediaries exchange goods and money. Typical examples are markets
for trading stocks, commodities, and derivatives: sellers and
buyers where each one trades a single item and one intermediary for
facilitating the transactions. However, the
well-understood cases are comparatively quite modest. The very special 
case of one seller,
and one buyer was thoroughly studied by Myerson and 
Satterthwaite~\cite{myerson1983efficient} in
their seminal paper; they provided a
beautiful characterization of many significant properties a mechanism
might have, along with an impossibility theorem showing that it cannot
possess them all. The paper also dealt with the case where a broker 
provides assistance by making two potential trades,
one with each agent, while also trying to maximize his profit. This
was extended in~\cite{Deng2014} to multiple sellers and buyers that are all 
immediately present in an offline manner.

Our work considers a similar setting, but with a key difference: the buyers 
and sellers
appear one-by-one, in a dynamic way. It is natural to study this question in 
the
incomplete information setting in which the intermediary, whose
objective is to maximize either profit or welfare, does not know the 
sequence of 
buyers and
sellers in advance. The framework that we employ to study the question
is the standard worst-case analysis of online algorithms whose goal is
to do as well as possible in the face of a powerful adversary which
tries to embarrass them.

We are not the first to apply techniques from online algorithms
to quantify uncertainty in markets: the closest work to ours would be by 
Blum et al.~\cite{blum_online_2006}
who consider buyers and
sellers trading identical items. In their setting, motivated mostly
from a financial standpoint, buyers and sellers arrived in an online
manner, with their bids appearing to the trader and expiring after
some time. The trader would have to match prospective buyers and
sellers to facilitate trade. Among a plethora of interesting
results, the trader's profit maximization problem was studied using
competitive analysis and techniques from online weighted matchings.
The key difference in our setting is that buyers and sellers do not
overlap: whenever a seller appears, the intermediary has to decide
whether or not to attempt to buy the item, without having a buyer ready
to go. Instead, the intermediary stores the item to sell it at a later
time. We believe this variation is able to capture ``slower'' markets,
like online marketplaces similar to Amazon or AliExpress (or even
regular retail stores), where uncertainty stems from not knowing how
large a stock of items to buy, in expectation of the buyers to come.

\subsection{Our Results}
Our aim is to study this dynamic market setting, where an intermediary
faces a sequence of potential buyers and sellers in an online
fashion. The goal of the intermediary is to maximize his profit, or society's 
welfare, by
buying from the sellers and selling to buyers. We take a
Bayesian approach to their utilities but use competitive analysis for
their arrivals: the main difficulty stems from the unknown (and
adversarially chosen) sequence of agents. Further particulars and notation 
is discussed in Section~\ref{sec:notation}. All the online algorithms we 
design are posted price, which are simple, robust and strongly truthful.

First, in Section~\ref{sec:general_setting} we study the case of
arbitrary sequences of buyers and sellers and show that the
competitive ratio---the ratio of the optimal offline profit over the
profit obtained by the online algorithm---is $\Theta(\sqrt n)$, where
$n$ is the total number of buyers and sellers.  
We also study the social welfare objective, where the goal is to maximize 
the total
utility of all participants, including the sellers, the buyers and the
intermediary The competitive ratio here is $\varTheta(\log n)$. All these 
results
are achieved via standard regularity assumptions on the distributions of the 
agent
values (see Section~\ref{sec:distros}), which we also prove to be necessary, 
by providing arbitrarily bad
competitive ratios in the case they are dropped 
(Theorem~\ref{th:welfarelower2}). 

To overcome the above pessimistic results, 
we next study in Section~\ref{sec:stocklimit} the setting where both the 
online and offline algorithms have a limited stock, i.e.\ at no point in time 
can they hold more than $K$ items. In this model, the competitive ratio is 
improved to $\varTheta(K\log n)$, asymptotically matching that of welfare. 
Finally, we also propose a way to restrict the input sequence, by 
introducing in Section~\ref{sec:balanced} the notion of $\alpha$-balanced 
streams, where at every prefix of the stream the ratio of the number of 
sellers to buyers has to be at least $\alpha$. Under this condition 
we are able to bring down the competitive ratios for both objectives to 
constants. In particular, the online posted-price mechanism that we use for 
profit maximization, and which is derived by a fractional relaxation of the 
optimal offline profit, achieves an asymptotically optimal ratio of $1+o(1)$. 
A similar mechanism is $4$-competitive for the welfare objective.

\subsection{Prior Work}

Our work is grounded on a string of fruitful research in mechanism
design. The main topics that are close to our effort are bilateral
trading, trading markets and sequential (online) auctions. 

The first step in bilateral trading and mechanism design was made by 
Myerson and 
Satterthwaite~\cite{myerson1983efficient} who proved
their famous impossibility result, even for the case of one buyer and
one seller. The case for profit maximization was extended to many buyers 
and sellers, each trading
a single identical item, in \cite{Deng2014}. Some of the assumptions
in our model are based in these two works. The impossibility result in 
\cite{myerson1983efficient}, among other difficulties, slowly vanishes for 
larger markets as was shown by McAffee~\cite{mcafee_dominant_1992}. 
There is still active
progress being made on this intriguing setting, concentrating on
simple mechanisms that provide good approximations either to welfare 
while
staying budget balanced and individually rational
\cite{blumrosen_almost_2016,blumrosen_approximating_2016} or to 
profit~\cite{niazadeh2014simple}.  Other 
recent developments include a hardness result for computing optimal 
prices~\cite{gerstgrasser2016revenue} and constant efficiency 
approximation with strong budget balance~\cite{colini2016approximately}.

Sequential auctions have also produced a collection of interesting
results, either extending the ideas of simple approximate mechanisms 
instead
of more complex, theoretically optimal ones or dealing with entirely
new settings. Prominent examples that compare the revenue (or welfare) 
generated by simple, posted-price
sequential auctions to the optimal, proving good approximations in certain
cases, are
\cite{Blumrosen2008} for single-item revenue, 
\cite{chawla_multi-parameter_2010,Yan2011} for 
matroid constraints (and some multi-dimensional settings) and 
\cite{feldman2015combinatorial} for combinatorial auctions.
There have been many approaches that apply competitive (worst-case)
analysis to mechanism design. The analysis of competitive auctions for
digital goods is explored in \cite{Bar-Yossef:2002_online,
	blum_near-optimal_2005} where near optimal algorithms are developed
using techniques inspired from no-regret learning. There is also a
deep connection between secretary problems and online sequential
auctions
\cite{hajiaghayi_adaptive_2004,hajiaghayi_online_2005,Babaioff_secretary_immorlica}.
Hajiaghayi et al. utilized techniques such as prophet inequalities  for 
unknown market size with distributional assumptions 
in~\cite{hajiaghayi2007automated}.
A comprehensive exposition of online mechanism design by Parkes can be
found in~\cite{Nisan:2007:AGT}.

There are also positive results in online auctions when the valuation
distribution is unknown (but usually known to be restricted in some
way, having bounded support or being monotone hazard-rate
etc). Babaioff et al.\ explored the case of selling a single item to
multiple i.i.d.\ buyers in \cite{Babaioff2011a}. The case of
$k$ items in a similar setting was studied in \cite{BabaioffDKS15},
while the case of unlimited items (digital goods auctions) in
\cite{Kleinberg:2003_online_learning} and
\cite{koutsoupias2013competitive}. Budget constraints where also
introduced in \cite{badanidiyuru_learning_2012}, where a procurement
auction was the focus.

\section{Preliminaries and Notation}
\label{sec:notation}
The input is a finite string $\sigma\in\ssets{S,B}^*$ of buyers ($B$) and 
sellers
($S$). The online algorithm has no knowledge of $\sigma(t)$, i.e.\ whether
$\sigma(t)=S$ or $\sigma(t)=B$, before step $t$. Also, it doesn't know the 
length
$n(\sigma)$ of $\sigma$. Denote $n_S(\sigma)$, $n_B(\sigma)$ the 
number of sellers
and buyers, respectively, in $\sigma$, and let $N_S(\sigma)$, 
$N_B(\sigma)$ be the
corresponding set of indices, i.e.\ 
$N_S(\sigma)=\ssets{t\fwh{\sigma(t)=S}}$ and
$N_B(\sigma)=\ssets{t\fwh{\sigma(t)=B}}$. Let 
$N(\sigma)=N_S(\sigma)\union
N_B(\sigma)=\ssets{1,2,\dots,n(\sigma)}$. In the above notation we will 
often drop
the $\sigma$ if it is clear which input stream we are referring to.

The values of the sellers are drawn i.i.d.\ from a probability distribution 
(with 
cdf) $F_S$ and these of buyers i.i.d.\ from a distribution $F_B$, both 
supported over intervals of nonnegative reals. 
We denote the random variable of the value of the $t$-th agent 
with $X_t$. We assume that distributions $F_S$ and $F_B$ are continuous, 
with bounded expectations $\mu_S$ and $\mu_B$, and have (well-defined) 
density functions $f_S$ and $f_B$, respectively. It will also be useful to 
denote by $X_S$ a random variable drawn from distribution $F_S$, and 
similarly $X_B\sim F_B$, and for any random variable $Y$ and positive 
integer $m$ use $Y^{(m)}$ to represent the maximum order statistic out of 
$m$ i.i.d.\ draws from the same distribution as $Y$. We will also use the 
shortcut notation $\mu^{(m)}=\expectsmall{Y^{(m)}}$.

We study \emph{posted-price} online algorithms that upon seeing the 
identity of 
the $t$-th agent (whether she is  a seller or a buyer), offer a price $p_t$. We 
buy one unit of the item from sellers that accept our price (i.e.\ if 
$\sigma(t)=S$ and $X_t\leq p_t$) and pay them that price, and we sell to 
buyers that accept our price (i.e.\ if $\sigma(t)=B$ and $X_t\geq p_t$), 
given stock availability (see below), and collect from them that price. So, a 
price 
$p_{t+1}$ can only depend on $\sigma(1),\dots,\sigma(t+1)$ and the 
result of the comparison $X_i\leq p_i$ in all previous steps 
$i=1,2,\dots,t$. Let $K_t$ denote the available stock at the 
beginning of the $t$-th step, i.e.\ $K_1=0$ and 
\begin{equation*}
K_{t+1}=
\begin{cases}
K_{t}+1, &\text{if}\;\; \sigma(t)=S \; \land \; X_{t}\leq p_{t}\\
K_{t}-1, &\text{if}\;\; \sigma(t)=B \; \land \; K_{t}\neq 0 \; \land \; 
X_{t}\geq p_{t}\\
K_{t}, &\text{otherwise}.
\end{cases}
\end{equation*}
Then, the set of sellers from whom we bought items during the algorithm's 
execution is  $I_S=\sset{t\in N_S\fwh{X_t\leq p_t}}$ and the set of buyers 
we sold to is $I_B=\sset{t\in 
	N_B\fwh{X_t\geq p_t \land K_t\neq 0}}$. Notice that these are random 
	variables, depending on the actual realizations of the agent values $X_t$.

The total \emph{profit} that the intermediary deploying an algorithm $A$ 
makes throughout the execution on an input stream $\sigma$, is the 
amount he manages to collect from the buyers via successful sales, minus 
the amount he spent in order to maintain stock availability from the sellers, 
that is
\begin{equation*}
\mathcal R(A,\sigma)=\expect{\sum_{t\in I_B} p_t- \sum_{t\in I_S} p_t }.
\end{equation*}

The social \emph{welfare} of  algorithm $A$ is the sum of valuations that all
participants achieve throughout the entire execution. That is, a seller at 
position
$t$ of the stream has a value of $X_t$ if she keeps her item, or a value of 
$p_t$ if
she sold the item to the intermediary; a buyer has a value of
$X_t-p_t$ if she managed to buy an item, since the item has a value of 
$X_t$ and he
spent $p_t$ to buy it, or $0$ otherwise. And the intermediary, has a value of
$\mathcal{R}(A)$ plus the value of the items that he didn't manage to sell in 
the
end and which are now left in his stock. Putting everything together and 
performing the
occurring cancellations, this results in the welfare to be expressed simply 
as the sum of the
values of the sellers that kept their items plus the sum of the values of the 
buyers
that bought an item, i.e.\
\begin{equation}
\label{eq:welfare1}
\mathcal W(A,\sigma)=\expect{\sum_{t\in N_S\setminus I_S} X_t + 
\sum_{t\in I_B} X_t}.
\end{equation}

We use \emph{competitive analysis}, the standard benchmark for online 
algorithms (see e.g.~\cite{Borodin1998a}), in order to quantify the 
performance of an online algorithm $A$: we compare it to that of an 
unrealistic, offline optimal algorithm $\opt$ has access to the entire 
stream $\sigma$ in advance. Then, we say that $A$ is 
$\rho(n)$-competitive with respect to welfare, if for any feasible input 
sequence of agents $\sigma$ with length $n$ and distributions $F_S$, 
$F_B$ for the agent values, it is
$\mathcal{W}(\opt,\sigma)\leq \rho(n)\cdot \mathcal{W}(A,\sigma)$. 
Notice how we allow the competitive ratio $\rho(n)$ to explicitly depend on 
the input's length, so that we can perform asymptotic analysis as 
$\mathcal{W}(\opt,\sigma)$ and $n$ tend to infinity. It is common in 
competitive analysis to allow for an additional constant in the right hand 
side of the above expression, that does not depend in the input, and which 
intuitively can capture some initial configuration disadvantage of the online 
algorithm. We do that for the case of the profit objective, as this constant 
will have a very natural interpretation: you can think of it as the maximum 
amount of deficit on which an online algorithm can run at any point in time, 
since an adversary can always stop the execution at any time he wishes. 
Given that interpretation, it makes sense to allow for this constant to 
depend on seller distribution $F_S$, since even when we face a single seller 
at the first step we expect to spend an amount that depends on the 
realization of her value. Thus, we will say that an online algorithm is 
$\rho(n)$-competitive with respect to welfare, if for any input sequence of 
agents $\sigma$ and any probability priors $F_S,F_B$, 

\begin{equation}
\label{eq:crprofitdef}
\mathcal{R}(\opt,\sigma)\leq \rho(n)\cdot \mathcal{R}(A,\sigma)+O(\mu_S).
\end{equation}

\section{Distributional Assumptions}
\label{sec:distros}
Throughout most of the paper we will make some assumptions on the 
distributions $F_B$, $F_S$ from which the buyer and seller values are drawn.
In particular, we will assume that $F_B$ has \emph{monotone hazard rate 
(MHR)}, i.e.\ $\log(1-F_B(x))$ is concave, and that $F_S$ is 
\emph{log-concave}, i.e.\ $\log F_S(x)$ is concave. 
For convenience, we will collectively refer to both the above constraints as 
\emph
{regularity}.
These conditions are rather standard in the optimal auctions literature, and 
they encompass a large class of natural of distributions including  e.g.\  
exponential, uniform and normal ones. 
Notice that distributions that satisfy the above conditions also fulfil the 
regularity requirements introduced in the seminal paper Myerson and 
Satterthwaite~\cite{myerson1983efficient} for the single-shot, one buyer 
and one seller setting of bilateral trade, namely that 
$x+\frac{F_S(x)}{f_S(x)}$ and $x-\frac{1-F_B(x)}{f_B(x)}$ are both increasing 
functions. 
Finally, we must mention that such regularity assumptions are necessary, in 
the sense that dropping them would result in arbitrarily bad lower bounds 
for the competitive ratios of our objectives, as it is demonstrated by 
Theorem~\ref{th:welfarelower2}. 

The following two lemmas demonstrate some key properties of the regular 
distributions that will be very useful in our subsequent analysis:

\begin{theorem}
	\label{th:MHRprops}
	For any random variable $Y$ drawn from an MHR distribution with 
	bounded expectation
	$\mu$ and standard deviation $s$,
	\begin{enumerate}
		\item \label{prop:MHR1} $\prob{Y\geq y}\geq \frac{1}{e}$ for any 
		$y\leq \mu$
		\item \label{prop:MHR12} $\prob{Y\geq y}< \frac{1}{e}$ for any $y> 
		2\mu$
		\item \label{prop:MHR2} $\expectsmall{Y^{(m)}}\leq H_m\cdot \mu $, 
		where $H_m$ is
		the $m$-th harmonic number.
		\item \label{prop:MHR3} $s \leq \mu$
	\end{enumerate}
\end{theorem}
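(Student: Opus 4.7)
The plan is to work throughout with the survival function $\bar F := 1-F$ and the cumulative hazard $H(x) := -\log \bar F(x)$, where $F$ is the cdf of $Y$. The MHR hypothesis is precisely that $H$ is convex, and together with $H(0)=0$ this supplies two tools that drive all four parts. First, because $\bar F(Y)$ is uniform on $[0,1]$, the variable $H(Y)$ is standard exponential, so $\expectsmall{H(Y)} = 1$. Second, convexity of $H$ with $H(0)=0$ gives the chord inequality $H(tx) \leq t\,H(x)$ for $t\in[0,1]$, as well as the ``new better than used'' (NBU) property $\bar F(s+t) \leq \bar F(s)\bar F(t)$, whose integrated form $\int_t^\infty \bar F(x)\,dx \leq \mu\,\bar F(t)$ will be invoked repeatedly.

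For item~\ref{prop:MHR1}, I would apply Jensen's inequality to the convex $H$ and the variable $Y$: $H(\mu) \leq \expectsmall{H(Y)} = 1$, hence $\bar F(\mu) \geq 1/e$, and monotonicity of $\bar F$ extends this to every $y \leq \mu$. For item~\ref{prop:MHR12} I would set $a := H(2\mu)$ and use the chord inequality on $[0,2\mu]$ to get $\bar F(x) \geq e^{-ax/(2\mu)}$ there; integrating and using $\mu \geq \int_0^{2\mu}\bar F(x)\,dx$ reduces the problem to the single-variable inequality $a \geq 2(1-e^{-a})$. A direct check at $a=1$ shows $2(1-1/e)>1$, so the inequality fails at $a=1$; since $a \mapsto a-2(1-e^{-a})$ is convex, vanishes at $0$, and is positive for large $a$, this forces $a>1$, yielding $\bar F(2\mu) < 1/e$, and monotonicity of $\bar F$ handles every $y > 2\mu$.

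For item~\ref{prop:MHR2} I would induct on $m$ via the telescoping $Y^{(m)} = Y^{(m-1)} + (Y_m - Y^{(m-1)})^+$, with $Y_m$ independent of the first $m-1$ draws. Conditioning on $Y^{(m-1)} = t$ and applying the integrated NBU bound gives $\expectsmall{(Y_m - t)^+} = \int_t^\infty \bar F(x)\,dx \leq \mu\,\bar F(t)$. Taking expectation over $Y^{(m-1)}$ and invoking the symmetry $\prob{Y_m > Y^{(m-1)}} = 1/m$ (each of $m$ i.i.d.\ draws is equally likely to be the maximum) yields $\expectsmall{Y^{(m)}} - \expectsmall{Y^{(m-1)}} \leq \mu/m$, which telescopes from $\expectsmall{Y^{(1)}} = \mu$ to $\mu H_m$.

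Item~\ref{prop:MHR3} then falls out of the same NBU bound via Fubini: $\expectsmall{Y^2} = 2\int_0^\infty x\bar F(x)\,dx = 2\int_0^\infty\!\int_t^\infty \bar F(x)\,dx\,dt \leq 2\mu \int_0^\infty \bar F(t)\,dt = 2\mu^2$, whence $s^2 = \expectsmall{Y^2} - \mu^2 \leq \mu^2$. The main obstacle I anticipate is the calibration in item~\ref{prop:MHR12}: the chord-integration has to be arranged so that the threshold lands at exactly $1/e$, and the resulting transcendental inequality must be analyzed carefully enough to secure the strict conclusion. Every other part then follows mechanically once the convexity plus NBU toolkit is in place.
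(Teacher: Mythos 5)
Your proposal is correct, but it takes a genuinely different route from the paper: the paper does not prove any of the four properties itself, it simply cites them (Property~\ref{prop:MHR1} to Barlow--Marshall Theorem~3.8, Property~\ref{prop:MHR12} to their Corollary~3.10, Property~\ref{prop:MHR2} to Lemma~13 of Babaioff et al., and Property~\ref{prop:MHR3} via the known bound $\expectsmall{Y^2}\leq 2\mu^2$), whereas you reconstruct everything from first principles out of a single toolkit --- convexity of the cumulative hazard $H=-\log\bar F$ with $H(0)=0$, the fact that $H(Y)$ is standard exponential, and the NBU/NBUE consequences $\bar F(s+t)\leq\bar F(s)\bar F(t)$ and $\int_t^\infty\bar F\leq\mu\bar F(t)$. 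Each step checks out: Jensen gives $H(\mu)\leq 1$ for Property~\ref{prop:MHR1}; the chord bound $\bar F(x)\geq e^{-ax/(2\mu)}$ on $[0,2\mu]$ with $a=H(2\mu)$, integrated against $\mu\geq\int_0^{2\mu}\bar F$, yields $a\geq 2(1-e^{-a})$, and the convexity analysis of $g(a)=a-2+2e^{-a}$ (zero at $0$, strictly negative at $a=1$, eventually positive) correctly forces $a>1$ --- you should only add the one-line exclusion of $a=0$, which would mean $Y\geq 2\mu$ almost surely and hence $\mu\geq 2\mu$, impossible; the telescoping $\expectsmall{Y^{(m)}}-\expectsmall{Y^{(m-1)}}\leq\mu\expectsmall{\bar F(Y^{(m-1)})}=\mu/m$ is exactly the standard argument behind the cited Lemma~13; and the Fubini computation $\expectsmall{Y^2}\leq 2\mu^2$ gives Property~\ref{prop:MHR3} just as the cited lemma does. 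What your approach buys is a self-contained, unified proof (all four facts from one convexity-plus-NBU argument, using only nonnegativity and continuity of the distribution, both assumed in the paper); what the paper's citation approach buys is brevity and, for Property~\ref{prop:MHR12}, slightly sharper constants available in the literature, neither of which is needed here.
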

\begin{proof} A proof of Property \ref{prop:MHR1} can be found in 
	\cite[Theorem
	3.8]{Barlow1964}, of Property \ref{prop:MHR12} in \cite[Corollary 
	3.10]{Barlow1964}, and of Property~\ref{prop:MHR2} in \cite[Lemma 
	13]{Babaioff2011a}.
	For Property~\ref{prop:MHR3}, from \cite[Lemma 2]{gkyr2015-wine} we 
	know that $\expect{Y^2}\leq 2\mu^2$, so 
	$s^2=\expect{Y^2}-\mu^2\leq 
	\mu^2$.
\end{proof}

\begin{lemma}\label{lemma:tailconcavedistro}
	For any distribution over $[0,\infty)$ with log-concave cdf $F$ and 
	expectation $\mu$, 
	$$
	x\leq e\mu F(x)\qquad \text{for any}\;\; x\leq \mu.
	$$
\end{lemma}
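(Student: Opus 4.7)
The plan is to split the argument in two stages: first establish the inequality at the endpoint $x=\mu$ via Jensen's inequality, and then extend it to all $x\in(0,\mu]$ by invoking log-concavity.

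For the first stage, I would exploit the fact that $F(X)\sim\mathrm{Uniform}(0,1)$ whenever $X\sim F$, so that $\expectsmall{\log F(X)} = \int_0^1 \log u\,du = -1$. Applying Jensen's inequality to the concave function $\log F$, one gets
\[
\log F(\mu) \;=\; \log F(\expectsmall{X}) \;\geq\; \expectsmall{\log F(X)} \;=\; -1,
\]
which yields $F(\mu)\geq 1/e$, proving the claim precisely at $x=\mu$.

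For the second stage, my plan is to combine the anchor $F(\mu)\geq 1/e$ with the structural rigidity provided by log-concavity. Concavity of $\log F$ forces the reverse hazard rate $(\log F)' = f/F$ to be nonincreasing, and this monotonicity, together with the integral identity $\mu=\int_0^\infty (1-F(y))\,dy$, tightly constrains how small $F$ can be at interior points of $[0,\mu]$. The strategy is a contradiction argument: if $F(x_0) < x_0/(e\mu)$ for some $x_0\leq\mu$, then the tangent-line upper bound $F(y)\leq F(x_0)\exp\bigl((y-x_0)f(x_0)/F(x_0)\bigr)$ coming from log-concavity forces the tail $1-F$ to remain too large for too long on $[x_0,\infty)$, inflating the integral $\int_0^\infty (1-F)$ beyond $\mu$ and producing a contradiction.

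The main technical obstacle is the singular behavior at $x=0$: since $F(0)=0$, we have $\log F(0^+)=-\infty$, so chord-type interpolations anchored at the origin are vacuous. The argument must therefore be anchored on $(\mu, \log F(\mu))$ rather than on the origin, and one must carefully track the slope $(\log F)'$ and how it couples the estimate $F(\mu)\geq 1/e$ to the value of $F$ at interior points. Balancing the contribution of $[0,x_0]$ (where $1-F\geq 1-F(x_0)$) against the contribution of $[x_0,\infty)$ (bounded below via the tangent upper bound on $F$) is the delicate bookkeeping needed to close the gap and reach the linear-in-$x$ lower bound $F(x)\geq x/(e\mu)$.
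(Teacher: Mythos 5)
Your Stage~1 is correct, and it is essentially the paper's own Jensen step: the paper applies the same computation to a scaled variable $Y=cX$, and your argument is the instance $c=1$, establishing $F(\mu)\geq 1/e$. Stage~2, however, is not a proof but a plan: the tangent-line/contradiction argument is only described, and the ``delicate bookkeeping'' you defer is the entire content of the lemma, since the whole difficulty is precisely in passing from the anchor $x=\mu$ to interior points $x<\mu$.

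That bookkeeping cannot be completed along the lines you propose, because log-concavity of the cdf together with the mean constraint does not force $F(x)\geq x/(e\mu)$ at points well below $\mu$. Concretely, take $F(x)=x^{2}$ on $[0,1]$ (extended by $1$ above): $\ln F$ is concave, the density $2x$ is even log-concave on its support, and $\mu=2/3$; yet at $x=3/10\leq\mu$ one has $e\mu F(x)=0.06e\approx 0.163<0.3=x$. Your intended mechanism fails exactly where you anticipated trouble: nothing in the hypotheses bounds the reverse hazard $f(x_0)/F(x_0)$ from above, so a small value of $F(x_0)$ can be followed by an arbitrarily steep rise of $\ln F$ just to the right of $x_0$ (here $f(x_0)/F(x_0)=2/x_0$), and the tail integral $\int_0^\infty(1-F(y))\,dy$ equals $\mu$ with no contradiction; concavity of $\ln F$ anchored at $(\mu,\ln F(\mu))$ yields only \emph{upper} bounds on $\ln F$ to the left of $\mu$, never lower bounds. (The claimed inequality would follow for all $x\leq\mu$ if $F$ itself were concave, since then $F(x)/x$ is nonincreasing and $F(\mu)\geq 1/e$ suffices; but that is much stronger than log-concavity.) Note also that the same example contradicts the intermediate bound $F(c\mu)\geq e^{-c}$ derived in the recorded proof (at $c=0.45$ it would require $F(0.3)\geq e^{-0.45}\approx 0.64$, whereas $F(0.3)=0.09$): the step asserting $\mathbb{E}[\ln F(cX)]=c\,\mathbb{E}[\ln F(X)]$ there is not an identity. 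So the part of the statement you can safely establish is the endpoint bound $F(\mu)\geq 1/e$; the extension to all $x\leq\mu$ is where both your plan and the recorded argument run into genuine trouble.
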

\begin{proof}
	Fix some $x\leq \mu$ and let $c=\frac{x}{\mu}$. Define the random 
	variable $Y=cX$, where $X$ is drawn from $F$, and let $F_Y$ be the cdf 
	of $Y$. Since $F$ is log-concave, $\ln F(t)$ is a concave function, and so 
	from Jensen's inequality
	$$
	\ln F(c\mu)=\ln F(\expectsmall{Y})\geq \int_0^\infty \ln F(t)\,dF_Y(t)= 
	\int_0^\infty \ln F(t)c\,dF(t)=c \int_0^1\ln u\,du=-c.
	$$
	So, $F(x)\geq 
	e^{-c}=\frac{c\mu}{\mu}\frac{e^{-c}}{c}=\frac{x}{\mu}\frac{e^{-c}}{c}$. 
	The lemma follows from the fact that $\frac{e^{-c}}{c}$ is decreasing for 
	$c\in (0,1]$.
\end{proof}

Finally, we prove the following property bounding the sum of maximum 
order statistics of a distribution, that holds for general (not necessarily 
regular) distributions and might be of independent interest:

\begin{lemma}
	\label{lemma:higherorderstats}
	The expected average of the $k$-th highest out of $m$ independent 
	draws 
	from a probability distribution with expectation $\mu$ and standard 
	deviation $s$ can be at most $\mu+2\sqrt{\frac{m}{k}}s$.
\end{lemma}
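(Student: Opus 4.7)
The plan is to center the draws and bound the sum of the top $k$ order statistics through a single application of Cauchy--Schwarz, using the elementary fact that any $k$ of the $m$ squared deviations are dominated by their full sum.

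Let $Y_1,\dots,Y_m$ be the i.i.d.\ draws, $Y_{(1)}\geq\dots\geq Y_{(m)}$ their order statistics, and set $Z_i=Y_i-\mu$, so that $\expectsmall{Z_i}=0$ and $\expectsmall{Z_i^2}=s^2$. Let $Z_{[1]}\geq\dots\geq Z_{[m]}$ denote the order statistics of the centered variables; since subtracting $\mu$ preserves the ranking, $\sum_{i=1}^k Y_{(i)}=k\mu+T$ with $T:=\sum_{i=1}^k Z_{[i]}$, and the claim reduces to showing $\expectsmall{T}\leq 2s\sqrt{km}$.

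Now, Cauchy--Schwarz yields $T^2\leq k\sum_{i=1}^k Z_{[i]}^2$, and since $Z_{[1]}^2,\dots,Z_{[k]}^2$ are just $k$ of the $m$ non-negative numbers $Z_1^2,\dots,Z_m^2$, one has $T^2\leq k\sum_{i=1}^m Z_i^2$. Taking expectations gives $\expectsmall{T^2}\leq kms^2$, and Jensen's inequality then gives $\expectsmall{T}\leq\sqrt{\expectsmall{T^2}}\leq s\sqrt{km}$ (the case $\expectsmall{T}<0$ is trivial). Dividing by $k$ shows that the top-$k$ average is bounded by $\mu+s\sqrt{m/k}$, comfortably within the claimed $\mu+2s\sqrt{m/k}$.

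The only conceptual subtlety is the step $\sum_{i=1}^k Z_{[i]}^2\leq\sum_{i=1}^m Z_i^2$: squaring destroys the signed ordering of the $Z_i$, so $Z_{[1]}^2,\dots,Z_{[k]}^2$ need not be the $k$ largest squares---but since we only want an upper bound and the missing terms are non-negative, adding them back is harmless. Notably, this proof uses no distributional assumption beyond finite variance, matching the scope claimed for the lemma.
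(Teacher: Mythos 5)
Your proof is correct, and it proves a slightly stronger statement (constant $1$ in place of $2$) by a genuinely different and more self-contained route. The paper instead invokes the distribution-free bound of Arnold (1979) on each individual order statistic, $\expectsmall{Y^{(i:m)}}\leq \mu+s\sqrt{\tfrac{i-1}{m-i+1}}$, and then controls the sum $\sum_{i=m-k+1}^{m}\sqrt{\tfrac{i-1}{m-i+1}}$ by comparison with the integral $\int_0^k x^{-1/2}dx$, which is where the factor $2\sqrt{km}$ comes from. Your argument avoids any external citation: the pointwise chain $T^2\leq k\sum_{i=1}^k Z_{[i]}^2\leq k\sum_{i=1}^m Z_i^2$ (valid since the top-$k$ centered order statistics are just $k$ of the $m$ values and squares are nonnegative, exactly the subtlety you flag), followed by $\expectsmall{T}\leq\sqrt{\expectsmall{T^2}}\leq s\sqrt{km}$, is elementary and yields the bound $\mu+s\sqrt{m/k}$ on the top-$k$ average. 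What the paper's route buys is per-index information (a bound on each $\expectsmall{Y^{(i:m)}}$ separately), which is not needed for the lemma; what your route buys is a shorter, citation-free proof with a better constant, under the same minimal hypothesis of finite mean and variance, so it serves equally well everywhere the lemma is applied (e.g.\ in the proof of Theorem~\ref{th:upperrrevenue1}).
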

\begin{proof}
	Let $Y^{(1:m)}\leq Y^{(2:m)} \leq Y^{(m:m)}$ denote the order statistics of 
	$m$ independent draws from a probability distribution with mean $\mu$ 
	and standard deviation $s$. We want to prove that 
	$$\sum_{i=m-k+1}^m\expectsmall{Y^{i:m}}\leq k\mu+2\sqrt{km} s.$$
	
	From \cite[Eq.~(4)]{Arnold1979} we know that 
	$\expectsmall{Y^{(i:m)}}\leq 
	\mu +s \sqrt{\frac{i-1}{m-i+1}}$, so it is enough to show that 
	$\sum_{i=m-k+1}^m\sqrt{\frac{i-1}{m-i+1}}\leq 2\sqrt{km}.$ Indeed, 
	by 
	using the transformation $j=m-i+1$, we get
	$$
	\sum_{i=m-k+1}^m\sqrt{\frac{i-1}{m-i+1}}=\sum_{j=1}^k\sqrt{\frac{m}{j}-1}\leq
	\sqrt{m}\sum_{j=1}^k\sqrt{\frac{1}{j}}\leq \sqrt{m}\int_{0}^k 
	x^{-1/2}\,dx=\sqrt{m}\cdot 2\sqrt{k}.
	$$
\end{proof}

\section{General Setting}
\label{sec:general_setting}
We start by studying the general setting where no additional assumptions 
are enforced on the structure of the input sequence. The adversary is free to 
arbitrarily choose the identities of the agents.
\subsection{Welfare}
\begin{theorem}
	\label{th:welfareupper2}
	For regularly distributed agent values\footnote{As matter of fact, in the 
		proof of Theorem~\ref{th:welfareupper2} just regularity for the buyer 
		values would suffice, i.e.\ $F_B$ being MHR.}, the online auction that 
	posts to any seller and buyer the median of their distribution is $O(\ln 
	n)$-competitive with respect to welfare. This bound is tight.
\end{theorem}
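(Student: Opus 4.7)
The plan is to bound $\mathcal W(\opt,\sigma)$ from above and $\mathcal W(A,\sigma)$ from below in terms of a common quantity $M=M(\sigma)$, the maximum number of $(i,j)$-pairs with $i<j$, $\sigma(i)=S$ and $\sigma(j)=B$ that can be simultaneously matched in the stream; $M$ upper-bounds the number of transactions any algorithm (including $\opt$) can perform on $\sigma$. Writing $m_S,m_B$ for the medians of $F_S,F_B$ (the prices posted by $A$), and bounding $\opt$'s contribution from sellers that retain their items by $\sum_{N_S}X_t$ and its contribution from matched buyers by the sum of at most $M$ order statistics of $n_B$ i.i.d.\ MHR draws (each dominated by the maximum, whose expectation is at most $H_n\mu_B$ by Theorem~\ref{th:MHRprops}, property~\ref{prop:MHR2}), one obtains
\[
\mathcal W(\opt,\sigma)\leq n_S\mu_S + M\,H_n\mu_B .
\]

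For the algorithm's welfare, the elementary identity $\mu_S=\tfrac12\expect{X_S\mid X_S\leq m_S}+\tfrac12\expect{X_S\mid X_S>m_S}$ together with $\expect{X_S\mid X_S\leq m_S}\leq m_S\leq\expect{X_S\mid X_S>m_S}$ gives $\expect{X_S\mathbf{1}[X_S>m_S]}\geq\mu_S/2$, so the expected contribution of sellers that keep their items is at least $n_S\mu_S/2$; the analogous inequality for buyers reads $\expect{X_B\mathbf{1}[X_B\geq m_B]}\geq\mu_B/2$. The heart of the proof is the lower bound $\expect{\card{I_B}}\geq M/4$: since every posted price is a median, each agent accepts independently with probability exactly $1/2$, so letting $\sigma^H\subseteq\sigma$ denote the random sub-sequence of accepting agents, the algorithm's trajectory on $\sigma$ coincides exactly with greedy FIFO temporal matching on $\sigma^H$ (buy at every accepting seller, sell at every accepting buyer whenever stock allows), and greedy FIFO is classically known to attain the maximum matching $M(\sigma^H)$ of the sub-sequence. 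Fixing any maximum matching $\mathcal M$ of $\sigma$ (of cardinality $M$), each pair $(i,j)\in\mathcal M$ has both endpoints in $\sigma^H$ with probability $1/4$ (independently), and the retained pairs still form a valid matching in $\sigma^H$, yielding $\expect{\card{I_B}}=\expect{M(\sigma^H)}\geq M/4$. Since $S_t$ is a function of $X_1,\dots,X_{t-1}$ and hence independent of $X_t$,
\[
\expect{\sum_{t\in I_B}X_t}=\sum_{t\in N_B}\expect{X_t\mathbf{1}[X_t\geq m_B]}\,\prob{S_t>0}\geq \tfrac{\mu_B}{2}\sum_{t\in N_B}\prob{S_t>0}=\mu_B\,\expect{\card{I_B}}\geq \tfrac{M\mu_B}{4} .
\]

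Combining the two welfare bounds via the elementary inequality $(a+b)/(c+d)\leq\max(a/c,b/d)$ then yields
\[
\frac{\mathcal W(\opt,\sigma)}{\mathcal W(A,\sigma)}\leq\frac{n_S\mu_S+M H_n\mu_B}{n_S\mu_S/2+M\mu_B/4}\leq\max(2,4H_n)=O(\ln n).
\]
For tightness, the natural construction is a stream consisting of one seller (with $F_S$ concentrated near $0$) followed by $n-1$ i.i.d.\ MHR buyers: $\opt$ sells the single item to the maximum-valued buyer and collects welfare $\Theta(\mu_B\ln n)$, whereas the median-posting algorithm sells to the first buyer with $X_t\geq m_B$ and only obtains $\expect{X_B\mid X_B\geq m_B}=\Theta(\mu_B)$ in expectation, giving an $\Omega(\ln n)$ ratio. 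The main technical obstacle is the sub-sequence argument in the second paragraph: one must verify carefully that the online, stock-constrained trajectory of the median algorithm really coincides with offline greedy FIFO on the random accepting sub-sequence $\sigma^H$, and that the independence $X_t\perp S_t$ used to split the expectation genuinely holds.
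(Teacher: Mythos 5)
Your upper-bound argument is essentially the paper's: the paper also bounds $\mathcal W(\opt)\leq n_S\mu_S+\kappa\,\mu_B^{(n_B)}\leq n_S\mu_S+\kappa H_n\mu_B$ (your $M$ is its $\kappa$, the maximum temporal matching) and lower-bounds the median-posting algorithm by $\tfrac12 n_S\mu_S+\tfrac14\kappa\mu_B$, giving $\max\{2,4H_n\}=O(\ln n)$. Where you differ is in how the buyer-side term $\tfrac14\kappa\mu_B$ is justified: the paper argues per matched pair, with the terse claim that buying from a matched seller guarantees stock for its matched buyer, whereas you couple the algorithm's trajectory with greedy FIFO matching on the random accepting sub-sequence and use survival probability $1/4$ of each pair of a fixed maximum matching; this is a cleaner and more careful route to the same constant, and the greedy-equals-maximum fact you invoke is exactly the paper's Lemma~\ref{lemma:fifo} (so you should either cite such a lemma or prove the exchange argument rather than call it ``classical''). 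Your independence step ($X_t$ versus the stock $S_t$) is valid because the prices are non-adaptive. The genuine divergence is the tightness claim: you only exhibit an instance on which the median-posting auction itself loses a factor $\Omega(\ln n)$, which shows the analysis of this particular mechanism cannot be improved. The paper proves something stronger via Lemma~\ref{th:welfarelower1}: on the stream $SB^n$, \emph{every} posted-price online mechanism has ratio $\varOmega(\mu^{(n)}/\mu)$, which for the exponential (MHR) distribution is $\varOmega(\ln n)$; that argument needs the price-swapping/monotonization step and the geometric-series bound, none of which your construction replaces. So if ``tight'' is read as the paper intends --- no online posted-price mechanism beats $O(\ln n)$ --- your proposal leaves that part unproved, although what you do prove is correct and suffices for the weaker, algorithm-specific reading.
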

\begin{proof}
	We split the proof of the theorem in two more general lemmas below, 
	corresponding to upper and lower bounds. Then, the upper bound for 
	our case of regular distributions follows easily from 
	Lemma~\ref{th:welfareupper1} by using constants $c_1=c_2=2$, and 
	taking into consideration that, from 
	Property~\ref{prop:MHR2} of
	Theorem~\ref{th:MHRprops}, the ratio of the maximum order statistic for 
	the MHR 
	distribution $F_B$ is upper bounded by $r_B(m)\leq H_m\leq O(\ln m)$. 
	For the lower bound, it is enough to observe that this ratio is attained by 
	an exponential distribution, which is MHR.

	\begin{lemma}
		\label{th:welfareupper1}
		For any choice of constants $c_1,c_2> 1$, the following fixed-price 
		online auction has a competitive ratio of at most
		$\max\sset{\frac{c_1}{c_1-1},c_1c_2\cdot r_B(n_B)}$ with respect to 
		welfare, where $n_B$ is the number of buyers, and 
		$r_B(m)=\mu^{(m)}_B/\mu_B$ is the ratio between the 
		$m$-maximum-order statistic and the expectation of the buyer value 
		distribution.
		\begin{itemize}
			\item Post to all sellers price $q=F^{-1}_S\left(\frac{1}{c_1}\right)$.
			\item Post to all buyers price 
			$p=F^{-1}_B\left(\frac{c_2-1}{c_2}\right)$.
		\end{itemize}
	\end{lemma}

	\begin{proof}
		Let $A$ denote our online algorithm and $\opt$ an offline algorithm 
		with 
		optimal
		expected welfare.  Fix an input stream $\sigma$. Looking at 
		\eqref{eq:welfare1}, the
		maximum welfare that $\opt$ can get from the sellers is at most 
		$\expect{\sum_{t\in
				N_S}X_t}=n_s\mu_S$, while from the buyers at most 
		$\expect{\card{I_B}\cdot
			X_B^{(n_B)}}\leq \kappa \expect{X_B^{(n_B)}}$, where $\kappa$ is 
			the 
		maximum number
		of sellers that can be matched to \emph{distinct} buyers that arrive 
		after
		them\footnote{You can think of that as the maximum size of a 
		matching 
			in the
			following undirected graph: the nodes are the sellers and the 
			buyers, 
			and there is
			an edge between any seller and all the buyers that appear after her 
			in 
			$\sigma$.} in
		$\sigma$: clearly, no mechanism can sell more than $\kappa$ items. 
		Bringing all together we have that
		$$
		\mathcal{W}(\opt)\leq n_s\mu_S + \kappa \mu^{(n_B)}_B=n_s\mu_S 
		+r_B(n_B)\cdot \kappa \mu_B.
		$$ 
		For the online algorithm now, from the sellers we get 
		$$\sum_{i\in N_S}\prob{X_i>q}\expectsmall{X_i|X_i>q}\geq 
		n_s(1-F_S(q))\expectsmall{X_S}=\frac{c_1-1}{c_1}\cdot n_S\mu_S$$
		and from the buyers at least
		$$
		\kappa \prob{X_S\leq q}\prob{X_B\geq p}\expectsmall{X_i|X_i\geq 
			p}\geq \kappa F_S(q)(1-F_B(p))\expectsmall{X_B}= 
		\frac{1}{c_1} \frac{1}{c_2}\cdot \kappa\mu_B,
		$$
		just by considering one of the $\kappa$-size matchings discussed 
		before: if we manage to buy from one of these $\kappa$ sellers, then 
		we 
		will definitely have stock availability for the matched buyer.
	\end{proof}

	The upper bound in Lemma~\ref{th:welfareupper1} cannot be improved:
	
	\begin{lemma}
		\label{th:welfarelower1}
		For \emph{any} probability distribution $F$,
		even if the seller and buyer values are i.i.d.\ from $F$, 
		the sequence $SB^n$ forces all posted-price online mechanisms to 
		have 
		a competitive ratio of $\varOmega(r(n))$, where 
		$r(n)=\mu^{(n)}/\mu$ is the ratio of the $n$-maximum-order 
		statistic of distribution $F$ to its expectation.
	\end{lemma}
	\begin{proof}
		Assume that the seller and buyer values are drawn i.i.d.\ from a 
		distribution $F$. Let $Y\sim F$ denote a random variable following 
		this 
		distribution and denote $\mu=\expectsmall{Y}$, 
		$\mu^{(n)}=\expectsmall{Y^{(n)}}$. Fix an online algorithm $A$ that 
		posts price $q$ to the seller and prices $\vecc p \equiv 
		p_1,p_2,\dots$ 
		to the buyers. Notice that this sequence of buyer prices $\vecc p$ 
		cannot depend on the actual stream length $n$, since that is being 
		selected adversarially.  
		
		We overestimate $A$'s expected welfare by assuming that it gets 
		maximum welfare from the first seller, i.e.\ $\expectsmall{Y}=\mu$, 
		while at the same buys for sure the item from her so that it has stock 
		availability to sell in the sequence of buyers. 
		Then, from \eqref{eq:welfare1} its expected welfare is given by
		\begin{equation}
		\label{eq:welfareseries}
		\mathcal{W}(\vecc p)=\mu+\sum_{t=1}^n \pi(t)\cdot \lambda(p_t),
		\end{equation}
		where 
		$$\pi(t)=\pi(\vecc 
		p,t)=\prod_{j=1}^{t-1}\prob{Y<p_j}=\prod_{j=1}^{t-1}F(p_j)$$
		and
		$$
		\lambda(y)=\prob{Y\geq y}\cdot \expect{Y\fwh{Y\geq 
				y}}=(1-F(y))\expect{Y\fwh{Y\geq y}}=\int_{ y}^\infty xf(x)\,dx 
				\leq 
		\mu.
		$$
		
		First we show that we can without loss assume that the buyer prices 
		are 
		nonincreasing. Indeed, for a contradiction suppose that exists a time 
		step $t^*$ such that 
		$\alpha\equiv p_{t^*}<p_{t^*+1}\equiv \beta$. Consider now the 
		online 
		mechanism that uses prices $\vecc p'$, where $\vecc p'$ results from 
		the original prices $\vecc p$ if we flip the prices at steps $t^*,t^*+1$, 
		i.e.\ $p_{t^*}'=\beta$, $p_{t^*+1}'=\alpha$, and $p'_{t}=p_{t}$ for all 
		$t\neq t^*,t^*+1$. Then, the difference in the expected welfare 
		between 
		the two mechanisms is
		\begin{align}
		\mathcal W(\vecc p')-\mathcal W(\vecc p) 
		&=\sum_{t=t^*}^{t^*+1} \pi(\vecc p',t)\cdot 
		\lambda(p_t')-\sum_{t=t^*}^{t^*+1} \pi(t)\cdot \lambda(p_t) \notag\\
		&= \pi(t^*)\lambda(\beta)+\pi(t^*)F(\beta)\lambda(\alpha)
		- \pi(t^*)\lambda(\alpha)-\pi(t^*)F(\alpha)\lambda(\beta) \notag\\
		&= 
		\pi(t^*)\left[(1-F(\alpha))\lambda(\beta)-(1-F(\beta))\lambda(\alpha) 
		\right] \notag\\
		&= \pi(t^*)(1-F(\alpha))(1-F(\beta))\left(\expect{Y\fwh{Y\geq \beta}} - 
		\expect{Y\fwh{Y\geq \alpha}} \right), \label{eq:helper2} 
		\end{align}
		which is nonnegative since $\alpha<\beta$.
		
		There are two options for the prices $\vecc p$: either $F(p_t)=1$ 
		for all $t$, or $k=\min\sset{t\fwh{F(p_t)<1}}$ is a well-defined 
		positive 
		integer that does not depend on $n$, in which case define the 
		constant 
		$c\equiv F(p_k)<1$. From \eqref{eq:welfareseries}, in the former case 
		it 
		is easy to see that $W(\vecc p)= \mu $, while in the latter one 
		$$
		\mathcal W(\vecc p)
		\leq \mu +\pi(k)\sum_{t=k}^nF(p_k)^{t-k}\lambda(p_t)
		\leq  \mu +\pi(k)\sum_{t=k}^n c^{t-k}\mu
		\leq 
		\left(1+\sum_{j=0}^\infty c^j \right)\mu
		=\frac{2-c}{1-c}\mu
		$$ 
		
		On the other hand, it is a well-know fact from the theory of prophet 
		inequalities (see e.g.~\cite{Kleinberg:2012}) that by using a price of 
		$\frac{\mu^{(n)}}{2}$ for all the buyers an offline mechanism can 
		achieve 
		a welfare of at least $\frac{\mu^{(n)}}{2}$ from the buyers, given of 
		course availability of stock. So, by setting e.g.\ a price equal to the 
		median of $F$ for the seller, the optimal offline welfare is at least 
		$\frac{1}{2}\mu +\frac{1}{4}\mu^{(n)}=\varOmega(\mu^{(n)})$.
	\end{proof}
\end{proof}

As the following theorem demonstrates, the regularity assumption on the 
agent values is necessary if we want to hope for non-trivial bounds. In 
particular, the lower bound in Lemma~\ref{th:welfarelower1} can be made 
arbitrarily high:

\begin{theorem}
	\label{th:welfarelower2}
	For any constant $\varepsilon\in (0,1)$, there exists a continuous 
	probability 
	distribution $F$ such that any online posted-price mechanism has a 
	competitive ratio of $\varOmega(n^{1-\varepsilon})$ on the input 
	sequence $SB^n$, even if the values of the sellers and the buyers are 
	i.i.d.
\end{theorem}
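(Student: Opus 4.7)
The plan is to reduce directly to Lemma~\ref{th:welfarelower1}, which asserts that on the stream $SB^n$ every posted-price online mechanism incurs a competitive ratio of $\varOmega(r(n))$, where $r(n)=\mu^{(n)}/\mu$ is the ratio of the expected $n$-maximum-order statistic to the expectation of the value distribution. Consequently, it suffices, for every $\varepsilon\in(0,1)$, to exhibit a single continuous distribution $F$ on the nonnegative reals with bounded mean for which $r(n)=\varOmega(n^{1-\varepsilon})$---a possibility that is ruled out in the MHR world of Theorem~\ref{th:MHRprops} by property~\ref{prop:MHR2}, and which is precisely what forces the $\ln n$ scaling in Theorem~\ref{th:welfareupper2}.

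The natural candidate is a Pareto distribution with shape parameter $\alpha=1/(1-\varepsilon)$, for instance the one with CDF $F(x)=1-(1+x)^{-\alpha}$ supported on $[0,\infty)$. Since $\alpha>1$, this distribution is continuous and has finite expectation $\mu=1/(\alpha-1)$, a positive constant depending only on $\varepsilon$. To lower-bound $\mu^{(n)}$ I would plug the threshold $y=n^{1/\alpha}-1$ into the trivial estimate
\[
\mu^{(n)} \;\geq\; y\cdot \prob{Y^{(n)}\geq y} \;=\; y\,(1-F(y)^n),
\]
observe that at this choice $F(y)^n=(1-1/n)^n\leq 1/e$, and conclude that $\mu^{(n)}\geq (1-1/e)(n^{1/\alpha}-1)=\varOmega(n^{1/\alpha})=\varOmega(n^{1-\varepsilon})$. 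Plugging this into $r(n)=\mu^{(n)}/\mu$ and invoking Lemma~\ref{th:welfarelower1} yields the desired $\varOmega(n^{1-\varepsilon})$ lower bound.

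I do not anticipate any significant technical obstacle here: Lemma~\ref{th:welfarelower1} has already done the hard work of converting a large $\mu^{(n)}/\mu$ gap into a competitive-ratio lower bound, and the heavy-tail behavior of the Pareto is a classical computation. The main thing worth sanity-checking is that the hidden constant inside the $\varOmega(r(n))$ from Lemma~\ref{th:welfarelower1} does not accidentally depend on $n$ through our distribution choice; but since $\alpha$ is frozen once $\varepsilon$ is fixed, the distribution $F$ itself is independent of $n$, so all suppressed constants end up as functions of $\varepsilon$ only and the asymptotic scaling in $n$ is preserved.
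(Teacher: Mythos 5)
Your proposal is correct and follows essentially the same route as the paper: reduce to Lemma~\ref{th:welfarelower1} and exhibit a Pareto-type distribution with tail exponent $1/(1-\varepsilon)$ for which $r(n)=\mu^{(n)}/\mu=\varOmega(n^{1-\varepsilon})$. The only (immaterial) difference is that the paper takes the Pareto on $[1,\infty)$ and evaluates $\mu^{(n)}$ exactly via the gamma function, whereas you use a shifted Pareto and the elementary threshold bound $\mu^{(n)}\geq y\,(1-F(y)^n)$, which is equally valid.
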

\begin{proof}
	Fix some $\varepsilon \in (0,1)$ and choose the Pareto distribution with 
	$F(x)=1-x^{-\frac{1}{1-\varepsilon}}$ for $x\in [1,\infty)$.
	The expected value of this distribution is $\mu=\frac{1}{\varepsilon}$ 
	while the expectation of the maximum order statistic out of $n$ 
	independent draws is
	$$
	\mu^{(n)}=\frac{n\varGamma(n)\varGamma(\varepsilon)}{\varGamma(n+\varepsilon)}\sim
	\varGamma(\varepsilon) n^{1-\varepsilon},
	$$
	since 
	$\lim_{n\to\infty}\frac{\varGamma(n+\varepsilon)/\varGamma(n)}{n^{\varepsilon}}=1$,
	where $\varGamma(x)$ denotes the standard gamma function.
	So, as $n$ grows large, the ratio in Lemma~\ref{th:welfarelower1} 
	becomes 
	$$r(n)=\frac{\mu^{(n)}}{\mu}=\varepsilon\varGamma(\varepsilon)\cdot 
	n^{1-\varepsilon}\geq \frac{4}{5}
	n^{1-\varepsilon}=\varOmega(n^{1-\varepsilon}).$$
\end{proof}
\subsection{Profit}

Now we turn our attention to our other objective of interest, that of 
maximizing the expected profit of the intermediary. As it turns out, this 
objective has some additional challenges that we need to address. For 
example,
as the following  theorem demonstrates, if the distribution of seller values 
is bounded away from $0$, the competitive ratio can be arbitrarily bad, 
even for i.i.d.\ values from a uniform distribution:

\begin{theorem}\label{th:lowerrevenue1} 
	For any $a>0$ and $\varepsilon\in (0,1)$, if the seller and
	buyer values are drawn i.i.d.\ from the uniform distribution over $[a,b]$ 
	where $b>
	2a$, then no online posted-price mechanism can have an approximation 
	ratio better
	than $a\left(1-\frac{1}{k}\right)^4 n^{1-\varepsilon}$ with respect to 
	profit,
	where $k=\frac{b}{a}-1$. In particular, for any uniform distribution over 
	an
	interval $[1,h]$ with $h\geq 3$ the lower bound is
	$\frac{1}{2^{4}}n^{1-\varepsilon}=\varOmega\left(n^{1-\varepsilon}\right)$.
\end{theorem}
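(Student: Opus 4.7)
My plan is to contrast the algorithm's behavior on two carefully chosen adversarial sequences, exploiting the fact that its strategy during an initial seller-only prefix must look identical in both cases. The structure parallels Theorem~\ref{th:welfarelower2} but replaces the ``heavy-tailed buyer'' trick (unavailable for bounded distributions) with a ``seller cost bounded away from zero'' mechanism tailored to the uniform support $[a,b]$ with $a>0$.

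First I would exploit the all-seller sequence $\sigma_1=S^n$. Since $\opt$ buys nothing on $\sigma_1$ (any purchase would be a pure loss without buyers), $\mathcal{R}(\opt,\sigma_1)=0$, and the competitive-ratio condition $\mathcal{R}(\opt,\sigma_1)\le\rho\,\mathcal{R}(A,\sigma_1)+O(\mu_S)$ forces any $\rho$-competitive algorithm $A$ to have expected total payment at most $O(\mu_S)/\rho$ on $\sigma_1$. Because every purchase from a uniform $[a,b]$ seller costs at least $a$ (a posted price below $a$ is rejected with probability $1$), the expected number of items $A$ acquires on $\sigma_1$ is at most $O(\mu_S/(a\rho))$, i.e.\ a $k$-dependent constant divided by $\rho$.

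Next I would consider the mixed sequence $\sigma_2=S^m B^{n-m}$ with $m$ of order $n^{1-\varepsilon}$. Since $A$ cannot distinguish the first $m$ steps of $\sigma_2$ from $\sigma_1=S^m$, the joint distribution of its prices and stock dynamics on those steps is identical, so its expected stock entering the buyer phase inherits the same $O(1/\rho)$ bound. Its expected revenue from the $n-m$ buyers, and hence $\mathcal{R}(A,\sigma_2)$, is therefore at most $O(b/\rho)$. Meanwhile $\opt$, knowing $\sigma_2$, posts constant prices $p^S,p^B$ interior to $(a,b)$ with carefully tuned acceptance probabilities, achieving expected profit $\Omega(m(b-a))$; tracking the constants under the hypothesis $b>2a$ produces the advertised $a(1-1/k)^4$ factor. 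Plugging into the CR condition on $\sigma_2$ then gives
\[
\Omega\bigl(a(1-1/k)^4\cdot m\bigr)\;\le\;\rho\cdot O(b/\rho)+O(\mu_S)\;=\;O(b),
\]
which, solved for $\rho$ with $m=\Theta(n^{1-\varepsilon})$, yields the claimed $\rho=\Omega(a(1-1/k)^4\,n^{1-\varepsilon})$.

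The hard part will be twofold. First, I need to rigorously transfer the ``few items bought'' constraint from $\sigma_1$ to the first $m$ steps of $\sigma_2$ despite the algorithm being possibly adaptive; this relies on the observation that at every step $t\le m$ the algorithm sees the same history in both sequences, so its prices and purchases are drawn from the same joint distribution. Second, I need to nail down the precise constant $a(1-1/k)^4$ in $\opt$'s profit: this requires choosing $p^S,p^B$ with the right margin from the endpoints, and the hypothesis $b>2a$ is exactly what ensures the factor $(1-1/k)=(b-2a)/(b-a)$ is strictly positive, so that $\opt$ actually realizes a positive per-trade spread.
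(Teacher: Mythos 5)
Your skeleton is the same as the paper's (a long seller block that the online algorithm cannot distinguish from an all-seller stream, the observation that every purchase costs at least $a$ so bounded expenditure means $O(1/a)$-ish items in stock, and an offline two-price strategy on the seller--buyer stream; the paper simply runs this on $\sigma=S^{n/2}B^{n/2}$), but your final inference is broken. You first use $\rho$-competitiveness on $\sigma_1=S^n$ to bound the online expenditure by $O(\mu_S)/\rho$, hence the online profit on $\sigma_2$ by $O(b/\rho)$, and then plug this back into $\mathcal{R}(\opt,\sigma_2)\le\rho\,\mathcal{R}(A,\sigma_2)+O(\mu_S)$. At that point the $\rho$'s cancel: the inequality reads $\Omega\bigl(a(1-1/k)^4 m\bigr)\le O(b)$, which contains no $\rho$ at all, so there is nothing to ``solve for $\rho$'' and the advertised bound $\rho=\Omega(n^{1-\varepsilon})$ does not follow from it. What your chain would actually establish (granting its earlier steps) is that the inequality fails for every finite $\rho$ once $n$ is large, i.e.\ a blanket impossibility exploiting the fact that on an all-seller stream $\mathcal{R}(A)$ is negative and gets multiplied by $\rho$ in definition \eqref{eq:crprofitdef}; that is a degenerate artifact of the definition, not the quantitative statement of the theorem, and it is not how the paper argues.

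The fix is to bound the online profit by a quantity that does not involve $\rho$. The paper's route: any algorithm whose expected expenditure on the seller prefix is $\omega(1)$ already violates \eqref{eq:crprofitdef} when the adversary stops the stream there (the additive slack is a constant), so one may assume the expenditure is at most, say, $\frac{n^{\varepsilon}}{128}$; since each purchase costs at least $a$, the algorithm buys at most $\frac{n^{\varepsilon}}{128a}$ items and its profit is at most $\frac{n^{\varepsilon}}{128a}(b-a)=\frac{k}{128}n^{\varepsilon}$, regardless of $\rho$. Comparing this directly with the offline profit on $S^{n/2}B^{n/2}$ gives the ratio. Note also that you deferred the constant: the paper gets it by posting to buyers the price at percentile $y=\frac12(1-\frac1k)$ and to sellers the price at percentile $\frac{y^2}{2}$, yielding offline profit $\frac{ak}{128}\bigl(1-\frac1k\bigr)^4 n$; the factor $a(1-\frac1k)^4 n^{1-\varepsilon}$ in the theorem is the ratio of this to the online cap $\frac{k}{128}n^{\varepsilon}$, so it is not obtained from the per-trade spread of $\opt$ alone, as your sketch suggests. (Your identity $1-\frac1k=\frac{b-2a}{b-a}$ is fine, and your choice $m=\Theta(n^{1-\varepsilon})$ would also work under the corrected, $\rho$-free accounting, but as written the argument does not deliver the theorem.)
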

\begin{proof}
	Fix $a,b>0$ such that $k\equiv\frac{b}{a}-1>1$. Assume that the buyer
	and seller values are drawn i.i.d.\ from the uniform distribution $[a,b]$, 
	i.e.\ the
	cdf is $F(x)=\frac{x-a}{b-a}=\frac{x-a}{ak}$ for all $x\in[a,(k+1)a]$. 
	Consider the
	input stream $\sigma=S^{n/2}B^{n/2}$, for $n$ even.
	
	First, it is easy to see that for any $\varepsilon\in (0,1)$ no online 
	algorithm can
	buy more than $\frac{n^\varepsilon}{128}\frac{1}{a}$ items from the 
	sellers in the
	first part of the stream, otherwise it will have to spend more than
	$\frac{n^\varepsilon}{128}=\omega(1)$. This means that the maximum 
	profit 
	that an online
	algorithm can get, even if it manages to sell to the buyers all the items 
	she bought
	from the sellers, is at most
	$\frac{n^\varepsilon}{128}\frac{1}{a}(b-a)=\frac{k}{128}n^\varepsilon$.
	
	Consider an offline algorithm that posts to seller and buyers the prices
	corresponding to the $\frac{1}{8}\left(1- \frac{1}{k}\right)^2$ and
	$\frac{1}{2}\left(1- \frac{1}{k}\right)$ percentiles, respectively. That is, 
	buyers
	get a price of $p=F^{-1}(y)=a(yk+1)$ and sellers
	$q=F^{-1}\left(\frac{y^2}{2}\right)=\frac{a}{2}(2+y^2k)$, where
	$y=\frac{1}{2}\left(1- \frac{1}{k}\right)$. Then, the probability that the 
	offline
	algorithm buys an item from a specific seller is $F(q)$, resulting in the 
	algorithm
	spending $\frac{n}{2}F(q)q$ in expectation. On the other hand, 
	underestimate its
	expected income buy considering only selling to the $i$-th buyer the 
	item that you
	got from the $i$-th seller. Then, the probability of achieving a successful
	transaction with a particular buyer is $F(q)(1-F(p))$, resulting in an 
	expected
	profit of at least
	
	\begin{align*}
	\frac{n}{2}F(q)(1-F(p))p-\frac{n}{2}F(q)q
	&=\frac{n}{2}\frac{y^2}{2}\left[(1-y)F^{-1}(y)-
	F^{-1}\left(\frac{y^2}{2}\right)\right]\\ &= 
	a\frac{n}{8}y^3\left[(3y-2)k-2
	\right]\\ &= \frac{ak}{128}n\left(1-\frac{1}{k}\right)^4.
	\end{align*}
\end{proof}

If we consider distributions supported over intervals that include $0$, under
standard regularity assumptions we can do a little better than the trivial 
lower
bound of Theorem~\ref{th:lowerrevenue1}:

\begin{theorem}
	\label{th:upperrrevenue1} For agent values regularly distributed over 
	intervals that include $0$, the following online
	posted-price mechanism achieves a competitive ratio of
	$O(n^{\frac{1}{2}+\varepsilon})$ for any $\varepsilon>0$:
	\begin{itemize}
		\item Post to the $i$-th seller price 
		$q_i=F_S^{-1}\left(\frac{1}{e}\frac{1}{i^{1/2+\varepsilon}}\right)$
		\item Post to all buyers price $p=\mu_B$.
	\end{itemize}
\end{theorem}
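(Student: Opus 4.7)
The plan is to bound $R(\opt)$ from above via a matching argument, control the algorithm's cost through Lemma~\ref{lemma:tailconcavedistro}, and lower-bound its revenue by a per-pair argument \`a la Lemma~\ref{th:welfareupper1}, then combine. Let $\kappa$ denote the maximum size of a matching of sellers to buyers of $\sigma$ in which each matched seller appears before its matched buyer (the same quantity used inside the proof of Lemma~\ref{th:welfareupper1}). Since no mechanism can perform more than $\kappa$ transactions and each sale earns at most the matched buyer's value, $R(\opt)$ is at most the expected sum of the top $\kappa$ order statistics out of $n_B$ i.i.d.\ draws from $F_B$. Invoking Lemma~\ref{lemma:higherorderstats} together with $s_B\leq\mu_B$ (Property~\ref{prop:MHR3} of Theorem~\ref{th:MHRprops}) yields
\[
R(\opt)\;\leq\;\mu_B\bigl(\kappa+2\sqrt{\kappa n_B}\bigr)\;\leq\;3\mu_B\sqrt{\kappa n_B},
\]
using $\kappa\leq n_B$.

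For the cost, I would first apply Lemma~\ref{lemma:tailconcavedistro} at $x=\mu_S$ to obtain $F_S(\mu_S)\geq 1/e$; combined with $F_S(q_i)=\tfrac{1}{e\,i^{1/2+\varepsilon}}\leq 1/e$ this forces $q_i\leq\mu_S$, and a second application of the lemma gives $q_i\leq e\mu_S F_S(q_i)=\mu_S/i^{1/2+\varepsilon}$. Hence
\[
\expect{\mathrm{cost}}=\sum_{i=1}^{n_S}q_i F_S(q_i)\;\leq\;\frac{\mu_S}{e}\sum_{i=1}^{n_S}\frac{1}{i^{1+2\varepsilon}}=O(\mu_S),
\]
which is absorbed by the additive $O(\mu_S)$ slack in~\eqref{eq:crprofitdef}. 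For the revenue, each buyer accepts price $\mu_B$ with probability $1-F_B(\mu_B)\geq 1/e$ by Property~\ref{prop:MHR1}. I would then fix a maximum matching that WLOG uses exactly the first $\kappa$ sellers $S_1,\dots,S_\kappa$---any matched seller of index $>\kappa$ can be swapped with an unused seller of index $\leq\kappa$, since the earlier seller still precedes the matched buyer---and, following Lemma~\ref{th:welfareupper1}, bound $\expect{\mathrm{revenue}}$ from below by $\mu_B$ times the sum, over matched pairs, of the joint probability of acquiring from the seller and the buyer accepting:
\[
\expect{\mathrm{revenue}}\;\geq\;\mu_B\bigl(1-F_B(\mu_B)\bigr)\sum_{i=1}^{\kappa}F_S(q_i)\;\geq\;\frac{\mu_B}{e^2}\sum_{i=1}^{\kappa}i^{-(1/2+\varepsilon)}\;=\;\Omega\!\bigl(\mu_B\,\kappa^{1/2-\varepsilon}\bigr).
\]

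Putting the three bounds together, the competitive ratio is at most
\[
O\!\left(\frac{\sqrt{\kappa n_B}}{\kappa^{1/2-\varepsilon}}\right)\;=\;O\!\bigl(\kappa^{\varepsilon}\sqrt{n_B}\bigr)\;\leq\;O\!\bigl(n^{1/2+\varepsilon}\bigr),
\]
since $\kappa\leq n_B\leq n$. The main obstacle is the per-pair stock-availability claim: an item acquired from a matched seller may be sold to an \emph{earlier} unmatched buyer before reaching its matched buyer, so the naive ``buy-then-sell'' reasoning of Lemma~\ref{th:welfareupper1} does not apply verbatim. The fix is a short induction on the matched pairs, using that along the chosen matching both $S_{i_1},\dots,S_{i_\kappa}$ and $B_{j_1},\dots,B_{j_\kappa}$ appear in increasing stream order: at each matched buyer either the algorithm sells (and the count of sales grows by one), or the stock is $0$, which forces the number of sales up to that point to already equal the acquisitions---and hence to be at least the number of matched pairs processed so far. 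Thus total sales dominate the number of ``good'' matched pairs, which is exactly what is needed.
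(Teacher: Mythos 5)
Your proof is correct and follows essentially the same route as the paper's: the same order-statistics upper bound on $\opt$'s profit via Lemma~\ref{lemma:higherorderstats} and Property~\ref{prop:MHR3}, the same $O(\mu_S)$ bound on expected spending via Lemma~\ref{lemma:tailconcavedistro}, and the same revenue lower bound through a matching supported on the first $\kappa$ sellers. The only (harmless) deviation is that where the paper handles the stock-availability issue by invoking the FIFO-matching Lemma~\ref{lemma:fifo}, you resolve it with a direct swap-to-the-first-$\kappa$-sellers plus induction argument showing the number of sales dominates the number of ``good'' matched pairs.
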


\begin{proof}
	
	Fix an input stream $\sigma$ of length $n$. Let $\mu_B$ and $s_B$ be 
	the expectation and standard deviation of the buyer value distribution 
	$F_B$.
	As in the proof of Lemma~\ref{th:welfareupper1}, let $\kappa$ denote 
	the 
	maximum number of sellers that can be matched to distinct buyers that 
	arrive after them in $\sigma$. If $\mu_{B}^{(j:m)}$ denotes the 
	expectation 
	of the $j$-th largest out of $m$ independent draws from $F_B$, since no 
	algorithm can make more than $\kappa$ sales over its entire execution, 
	the 
	optimal offline profit is upper bounded by 
	$$\sum_{j=1}^{\kappa} \mu_{B}^{(n_B-j+1:n_B)}\leq 
	\sum_{i=n-\kappa+1}^{n} \mu_{B}^{(i:n)} \leq \kappa \mu_B 
	+2\sqrt{\kappa n}s_B\leq 3\sqrt{\kappa}\sqrt{n} \mu_B,$$
	where for the second inequality we have used 
	Lemma~\ref{lemma:higherorderstats} and for the last one we have used 
	Property~\ref{prop:MHR3} from Theorem~\ref{th:MHRprops} and the 
	obvious fact 
	that $\kappa\leq n$.
	
	For the analysis of the online mechanism now, the expected number of 
	items that it gets from the first $\kappa$ sellers is $\sum_{i=1}^\kappa 
	F_S(q_i)=\frac{1}{e}\sum_{i=1}^\kappa\frac{1}{i^{1/2+\varepsilon}}\geq 
	\frac{1}{e}\kappa^{1/2-\varepsilon}$. 
	So, by considering the FIFO matching between these first $\kappa$ 
	sellers 
	and their corresponding buyers (see Lemma~\ref{lemma:fifo}), the 
	expected 
	income of our algorithm is at least $\frac{1}{e}\kappa^{1/2-\varepsilon} 
	(1-F(p))=\frac{1}{e}\kappa^{1/2-\varepsilon} (1-F(\mu_B))\geq 
	\frac{1}{e^2}\kappa^{1/2-\varepsilon}$, where in the last step we 
	deployed 
	Property~\ref{prop:MHR1} of Theorem~\ref{th:MHRprops}. So, it only 
	remains to be 
	shown that the online algorithm does not spend more than a constant 
	amount. Indeed, our expected spending is at most 
	$$\sum_{i=1}^\infty q_iF_S(q_i)\leq \sum_{i=1}^\infty e\mu_S 
	F_S(q_i)^2 = \frac{1}{e}\mu_S\sum_{i=1}^\infty 
	\frac{1}{i^{1+2\varepsilon}} =O(\mu_S),$$
	where for the first inequality we have used 
	Lemma~\ref{lemma:tailconcavedistro}, 
	taking into consideration that seller prices $q_i$ are decreasing and 
	$q_1$ 
	is below $\mu_S$. This is true because again from 
	Lemma~\ref{lemma:tailconcavedistro} for $x=\mu_S$ we know that $\mu_S 
	\leq e\mu_S 
	F(\mu_S)$, or equivalently $F(\mu_S)\geq \frac{1}{e}=F(q_1)$.
\end{proof}

The algorithm of Theorem~\ref{th:upperrrevenue1} is asymptotically 
optimal:

\begin{theorem}
	\label{th:lowerrevenue2}
	If the seller and buyer values are drawn i.i.d.\ from the uniform 
	distribution over $[0,1]$, then no online posted-price mechanism can 
	have an approximation ratio better than $\varOmega\left(\sqrt{n}\right)$.
\end{theorem}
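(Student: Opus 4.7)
The plan is to use the input $\sigma=S^{n/2}B^{n/2}$ together with its truncation $\sigma'=S^{n/2}$, trapping any posted-price online algorithm between two conflicting requirements. During the seller phase the online cannot afford to spend much in expectation, for otherwise $\sigma'$ would violate the competitive bound (since on $\sigma'$ the optimum profit is $0$). A Cauchy--Schwarz argument will then cap the number of items it can possibly acquire at $O(\sqrt{n})$, whereas $\opt$ trivially extracts $\Omega(n)$ from $\sigma$.

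First I would set up the stock/spending bound. Let $q_i\in[0,1]$ denote the (possibly randomized, history-dependent) price posted to the $i$-th seller and let $N=|I_S|$ and $T=\sum_{t\in I_S}p_t$ be, respectively, the number of items acquired and the total amount spent during the seller phase. Since $F_S(x)=x$ on $[0,1]$, one has $\expectsmall{N}=\expectsmall{\sum_{i=1}^{n/2}q_i}$ and $\expectsmall{T}=\expectsmall{\sum_{i=1}^{n/2}q_i^2}$. The pointwise inequality $\bigl(\sum q_i\bigr)^2\leq (n/2)\sum q_i^2$ together with Jensen's inequality yields $\expectsmall{N}^2\leq (n/2)\,\expectsmall{T}$. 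Because each item can be resold to a buyer for at most $1$, this gives $\mathcal{R}(A,\sigma)\leq \expectsmall{N}\leq \sqrt{(n/2)\,\expectsmall{T}}$.

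Next I would apply the competitive inequality \eqref{eq:crprofitdef} to the truncated sequence $\sigma'=S^{n/2}$. Since the online's behavior on the first $n/2$ steps has the same distribution on $\sigma$ and $\sigma'$, the quantities $N$ and $T$ are identical there. On $\sigma'$ we have $\mathcal{R}(\opt,\sigma')=0$ (no buyers to sell to) and $\mathcal{R}(A,\sigma')=-\expectsmall{T}$, so competitiveness forces $\expectsmall{T}=O(\mu_S)=O(1)$, i.e.\ an adversary can always stop the algorithm at a point where it has spent only a constant. Plugging this in gives $\mathcal{R}(A,\sigma)=O(\sqrt{n})$. On the other hand, a trivial fixed-price offline mechanism posting $q=1/4$ to every seller and $p=3/4$ to every buyer produces about $n/8$ accepting sellers and $n/8$ accepting buyers; matching them via a FIFO pairing of each accepting seller to a subsequent accepting buyer yields $\Omega(n)$ completed trades of profit $p-q=1/2$ each, so $\mathcal{R}(\opt,\sigma)=\Omega(n)$. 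Substituting both bounds into \eqref{eq:crprofitdef} gives $\Omega(n)\leq \rho(n)\cdot O(\sqrt{n})+O(1)$, i.e.\ $\rho(n)=\Omega(\sqrt{n})$, as desired.

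The conceptual crux is the interplay between the constant budget cap $\expectsmall{T}=O(1)$ and Cauchy--Schwarz: on $[0,1]$ the per-seller purchase probability equals the per-seller price, so bounded expected spending automatically bounds the expected stock by $\sqrt{n/2}$. The only technical wrinkle will be ensuring the Cauchy--Schwarz step goes through for adaptively chosen, randomized prices $q_i$; this amounts to commuting the pointwise inequality with the outer expectation via Jensen's inequality, which is standard but is where the proof actually does work.
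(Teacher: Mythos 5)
Your proposal is correct and follows essentially the same route as the paper's proof: the adversarial sequence $S^{n/2}B^{n/2}$, the observation that the allowed additive deficit forces the expected seller-phase spending to be $O(\mu_S)=O(1)$ (you make this explicit via the truncation $S^{n/2}$, which the paper leaves implicit), and the Cauchy--Schwarz/Jensen step showing that constant expected spending caps the expected number of purchased items at $O(\sqrt{n})$, hence online profit $O(\sqrt{n})$, against a fixed-price offline mechanism earning $\varOmega(n)$. The only nit is your offline accounting: with prices $q=1/4$, $p=3/4$ the paper's per-index pairing would give a negative per-pair bound, so your global FIFO accounting is the right one, but you should add the one-line remark that the expected excess of accepting sellers over accepting buyers (items bought but never resold) is only $O(\sqrt{n})$, so its cost does not affect the $\varOmega(n)$ bound; alternatively use the paper's prices $q=1/8$, $p=1/2$ with the per-index pairing.
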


\begin{proof} As in the lower bound proof of 
	Theorem~\ref{th:lowerrevenue1} we again deploy
	an input sequence $\sigma=S^{n/2}B^{n/2}$ with $n$ even. Let $F(x)=x$ 
	be the cdf of
	the uniform distribution over $[0,1]$. This time we argue that no online 
	algorithm
	can buy more than $\varOmega(\sqrt{n})$ items from the sellers, in 
	expectation. Indeed, let $q_i$ be the price that
	the online mechanism posts to the $i$-th seller. Then, the expected 
	number of items
	$m_\sigma$ bought from the sellers is 
	$\sum_{i=1}^{n/2}F(q_i)=\sum_{i=1}^{n/2}q_i$,
	while the expected expenditure $c_\sigma$ is
	$\sum_{i=1}^{n/2}F(q_i)q_i=\sum_{i=1}^{n/2}q_i^2$. By the convexity of 
	the function
	$t\mapsto t^2$ and Jensen's inequality it must be that
	$$
	m_\sigma=\sum_{i=1}^{n/2}q_i \leq
	\sqrt{\frac{n}{2}}\left(\sum_{i=1}^{n/2}q_i^2\right)^{\frac{1}{2}}=O\left(\sqrt{c_\sigma}\sqrt{n}\right),
	$$
	so given that our deficit must be $c_\sigma=O(\frac{1}{2})$, we get the 
	desired
	$m_\sigma=O(\sqrt{n})$. As a result, the online profit can be at most
	$O(\sqrt{n})\cdot 1=O(\sqrt{n})$.
	
	For the offline algorithm we use prices $q=\frac{1}{8}$ and 
	$p=\frac{1}{2}$ for the
	buyers and sellers, respectively, and by an analogous analysis to that of 
	the proof
	of Theorem~\ref{th:lowerrevenue1}, we get that the expected offline 
	profit 
	is at least
	$$
	\frac{n}{2}F(q)(1-F(p))p-\frac{n}{2}F(q)q=\frac{n}{2}\frac{1}{8}\left(1-\frac{1}{2}
	\right)\frac{1}{2}-\frac{n}{2}\frac{1}{8}\frac{1}{8}=\frac{n}{128}=\varOmega(n).
	$$
\end{proof}

\section{Limited Stock}
\label{sec:stocklimit}
If one looks carefully at the lower bound proof for the profit in 
Theorem~\ref{th:lowerrevenue2}, it becomes clear that the source of 
difficulty for any online algorithm is essentially the fact that without 
knowledge of the future, you cannot afford to spend a super-constant 
amount of money into accumulating a  large stock of items, without the 
guarantee that there will be enough demand from future buyers. In 
particular, it may seem that the offline algorithm has an unrealistic 
advantage of using a stock of infinite size. The natural way to mitigate this 
would be to introduce an upper bound $K$ on the number of items that 
both the online and offline algorithms can store at any point in time. As it 
turns out, this has a dramatic improvement in the competitive ratio for the 
profit:
\begin{theorem}
	\label{th:stocklimit} 
	Assuming stock sizes of at most $K$ items, under our standard 
	regularity assumptions the following online mechanism is $O\left(Kr\log 
	n\right)$-competitive, where $r=\max\sset{1,\frac{\mu_S}{\mu_B}}$:
	\begin{itemize}
		\item If your stock is not currently full, post to sellers price 
		$q=F_S^{-1}\left(\frac{1}{r}\frac{1}{2eK}\right)$
		\item Post to all buyers price $p=\mu_B$.
	\end{itemize}
\end{theorem}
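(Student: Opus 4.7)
The plan is to proceed in the same spirit as the proof of Theorem~\ref{th:upperrrevenue1}, bounding $\opt$ and the online profit separately, but now using the stock constraint to get a much tighter handle on the optimum. The key combinatorial quantity will be $\kappa_K$, defined as the maximum number of trades any stock-$K$ mechanism can perform on $\sigma$: equivalently, the size of the largest matching of sellers to later buyers in $\sigma$ such that at every prefix the difference between matched sellers so far and matched buyers so far lies in $[0,K]$.

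For the upper bound on $\opt$ I would observe that $\opt$ performs at most $\kappa_K$ trades; since seller payments are nonnegative, dropping them gives an upper bound, so the expected profit is at most the expected sum of the top $\kappa_K$ out of $n_B$ i.i.d.\ buyer values. Each term is bounded by $\mu_B^{(n_B)}\leq H_{n_B}\mu_B=O(\mu_B\log n)$ via Property~\ref{prop:MHR2} of Theorem~\ref{th:MHRprops}, yielding $\mathcal{R}(\opt,\sigma)=O(\kappa_K\,\mu_B\log n)$.

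For the online side, I would first apply Lemma~\ref{lemma:tailconcavedistro} (applicable because $F_S(\mu_S)\geq 1/e\geq F_S(q)$ implies $q\leq \mu_S$) to obtain $q\leq e\mu_S F_S(q)=\mu_S/(2Kr)\leq \mu_B/(2K)$. Each online purchase costs $q$, each sale earns exactly $\mu_B$, and the stock cap forces the number of purchases to exceed the number of sales by at most $K$. Hence the expected profit is at least $(\mu_B-q)\,\mathbb{E}[\#\text{sales}]-Kq \geq (\mu_B/2)\,\mathbb{E}[\#\text{sales}]-O(\mu_S)$, with the $Kq\leq \mu_S/(2r)$ residual absorbed into the additive slack of the competitive-ratio definition~\eqref{eq:crprofitdef}.

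The main obstacle and crux of the argument is to show $\mathbb{E}[\#\text{sales}]=\Omega(\kappa_K/(Kr))$. I would split a maximum stock-$K$ matching $M$ of size $\kappa_K$ into $\lceil\kappa_K/K\rceil$ consecutive batches of $K$ matched pairs each, noting that each batch occupies a contiguous time window containing $K$ matched sellers followed (in the stock-$K$ sense) by $K$ matched buyers. By independence of the value draws, at least one of the $K$ sellers in a batch accepts $q$ with probability $1-(1-1/(2eKr))^K=\Omega(1/r)$, and, conditional on that item being in stock, at least one of the $K$ buyers accepts $\mu_B$ with probability $1-(1-1/e)^K=\Omega(1)$. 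Coupling the true online process with a batch-restricted variant that only interacts with $M$-agents would then give $\Omega(\kappa_K/(Kr))$ expected sales overall. The delicate point is ruling out interference from agents outside a batch, which can both fill the real stock (blocking batch-purchases) and drain it (helping batch-sales); I would control this by observing that the seller-acceptance rate $F_S(q)=1/(2eKr)$ is small enough to keep the expected stock occupancy $O(1)$ throughout, so the per-batch $\Omega(1/r)$ estimate survives. Putting the three pieces together yields the announced $O(Kr\log n)$ competitive ratio.
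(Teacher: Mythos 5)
Your skeleton is the same as the paper's: the same matching quantity $\kappa$ under the $K$-cut constraint, the same bound $\mathcal{R}(\opt,\sigma)\leq \kappa H_n\mu_B$ via Property~\ref{prop:MHR2} of Theorem~\ref{th:MHRprops}, the same use of Lemma~\ref{lemma:tailconcavedistro} to get $q\leq \mu_B/2$, and the same accounting ``profit $\geq (p-q)\cdot(\text{expected number of sales}) - Kq$'' with $Kq=O(\mu_S)$ absorbed into the additive term of \eqref{eq:crprofitdef}. The divergence is exactly at the step you yourself flag as the crux, namely showing the expected number of sales is $\varOmega\bigl(\kappa/(Kr)\bigr)$, and there your argument has a genuine gap. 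Your batching sketch rests on the claim that interference from agents outside a batch is negligible because the small acceptance rate $F_S(q)=\frac{1}{2eKr}$ keeps the expected stock occupancy $O(1)$. That claim is neither proved nor true in general: the adversary can place a long run of sellers before (or between) batches, during which the mechanism, offering $q$ to every seller while the stock is not full, drives the occupancy to $K$ with high probability; then every seller of the next batch is blocked and receives no offer, so your per-batch acceptance estimate $1-(1-\frac{1}{2eKr})^K$ simply does not apply. Moreover, even where an $O(1)$ bound on expected occupancy held, it would not control $\Pr[\text{stock full}]$ when $K=O(1)$ (e.g.\ $K=1$), and what is really needed is the stock state at the specific, adversarially placed and history-correlated arrival times of the batch sellers. (Blocked batches do tend to produce sales through their buyers instead, since the full stock was paid for at price $q\leq\mu_B/(2K)$, so the statement is plausibly still true; but that case analysis, with a charging scheme that avoids double counting across batches, is precisely the missing content.)

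The paper resolves this interference issue structurally rather than probabilistically: Lemma~\ref{lemma:fifo} shows that a maximum matching subject to the temporal-cut-$\leq K$ constraint can be computed \emph{online} by a FIFO queue of size $K$, and the offer policy is tied to that queue, so each of the $\kappa$ matched sellers actually gets an offer and the item bought from a matched seller is (in the analysis) reserved for its matched buyer. This yields a clean per-pair bound $\kappa F_S(q)(1-F_B(p))(p-q)\geq \frac{1}{4e^2}\frac{1}{Kr}\kappa\mu_B$, with no batching and no occupancy argument, and the competitive ratio follows by dividing into $\kappa H_n\mu_B$. To repair your proof you would need either this FIFO-matching lemma (or an equivalent online construction of the $\kappa$-matching that guarantees matched sellers are never blocked), or an explicit charging argument crediting sales to blocked batches; the average-occupancy observation as stated cannot carry that weight.
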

\begin{proof}
	The proof is similar to that of Theorem~\ref{th:upperrrevenue1}, but 
	certain points
	need some special care. Let $\kappa$ again be the maximum number of 
	sellers that
	can be matched to distinct buyers that follow them, but this time under 
	the
	added  restriction of the $K$-size stock. This corresponds 
	to the maximum matching with no ``temporal'' cut of size greater than 
	$K$. We write
	``temporal'' cut to mean any cut in the graph that separates the vertices
	(buyers and sellers) $1\ldots i$ from vertices $i+1\ldots n$ --- that is, 
	precisely the
	condition that we cannot match more than $K$ sellers from an initial 
	segment to
	buyers later in the sequence.
	Lemma~\ref{lemma:fifo} in the appendix demonstrates that such a  
	$\kappa$-size matching can be computed not only offline, but also 
	online using a FIFO 
	queue of length $K$, adding sellers to the queue while it is not full and
	matching buyers greedily: we post prices to sellers, only if we have free 
	space in our stock, i.e.\ when the matching queue is not full.  
	We underestimate  the online profit by considering only selling an item to 
	the buyer that is matched to the seller from which we bought the item.
	Mimicking the analysis in the proof of Theorem~\ref{th:upperrrevenue1} 
	we can see that the expected number of items bought from the $\kappa$ 
	matched sellers is  
	$\kappa F_S(q)\geq \kappa\frac{1}{2eK}\frac{1}{r}$. 
	
	Now we argue that $q\leq \frac{\mu_B}{2}$. Indeed, since $F_S(q)\leq
	\frac{1}{e}$ we know for sure that $q\leq \mu_S$, and so from
	Lemma~\ref{lemma:tailconcavedistro} it is $q\leq e\mu_S F(q)\leq
	e\mu_S\frac{\mu_B}{\mu_S}\frac{1}{2e}=\frac{\mu_B}{2}$. Next, notice 
	that
	whenever we make a successful sale, the contribution to profit is 
	$p-q\geq
	\mu_B-\frac{\mu_B}{2}=\frac{1}{2}\mu_B$. 
	Thus, the total expected gain in profit from sales is at least 
	$$\kappa F_S(q) (1-F_B(p))(p-q)
	\geq 
	\kappa\frac{1}{2eK}\frac{1}{\frac{\mu_S}{\mu_B}+1}(1-F_B(\mu_B))\frac{1}{2}\mu_B
	\geq \frac{1}{4e^2}\frac{1}{Kr}\kappa\mu_B,$$ 
	where in the bound for the quantile $1-F_B(\mu_B)$ we used 
	Property~\ref{prop:MHR1} of Theorem~\ref{th:MHRprops}.
	Also, the profit we
	loose from the cost of unsold items cannot be more than $Kq\leq K\mu_S
	e\frac{1}{2eK}=O(\mu_S)$.
	On the other hand, the offline profit is at most $\kappa$ times the 
	expected 
	maximum order statistic out of $n$ independent draws from $F_B$, so 
	by 
	Property~\ref{prop:MHR2} of Theorem~\ref{th:MHRprops} it is upper 
	bounded by 
	$\kappa H_n\mu_B$. Putting everything together, the competitive ratio 
	of the online algorithm is at most 
	\begin{equation*}\frac{\kappa 
		H_n\mu_B}{\frac{1}{4e^2}\frac{1}{Kr}\kappa\mu_B}=O\left(Kr\ln n 
	\right).
	\end{equation*}
\end{proof}
\begin{remark}
	We want to mention here that the above upper bound in 
	Theorem~\ref{th:stocklimit},
	although a substantial improvement from the $\varTheta(\sqrt{n})$ one 
	for the general
	case in Theorem~\ref{th:upperrrevenue1}, it cannot be improved further: 
	the logarithmic
	lower bound is unavoidable, since a careful inspection of the welfare 
	lower bound in
	the proof of Lemma~\ref{th:welfarelower1} reveals that the same analysis 
	carries over to the
	profit. In particular, the last parenthesis of $\expect{Y\fwh{Y\geq \beta}} 
	- 
	\expect{Y\fwh{Y\geq \alpha}}$ in \eqref{eq:helper2} will be replaced by 
	$\beta-\alpha$ which is 
	still nonnegative, and also the bad instance sequence of $SB^n$ does not 
	use a stock of size more than $1$.
	We try to overcome this obstacles by considering a different model of 
	constrained streams in the following section. 
\end{remark}

\section{Balanced Sequences} \label{sec:balanced}
As we saw in Section~\ref{sec:stocklimit}, introducing a restriction in the 
size of available stock can improve the performance of our online 
algorithms with respect to profit. However, the bound is still 
super-constant. Thus, it is perhaps more reasonable to assume some 
knowledge of the ratio 
$\alpha$ between buyers and sellers in sequences the intermediary might 
face. This allows us finer control over the trade-off between high volume of 
trades and the hunt for greater order statistics.

In this section we analyse the competitive ratio for profit and welfare 
obtained by online 
algorithms on $\alpha$-\emph{balanced} sequences.
\begin{definition}\label{def:alpha-balance}
	Let $\alpha$ be a positive integer. A sequence containing $m$ buyers is 
	called
	$\alpha$-balanced if it contains $\alpha m$ sellers and the
	$i$-th buyer is preceded by at least $\alpha i$ sellers.
\end{definition}

For example, the sequence $SBSSBSBB$ is $1$-balanced, but 
$SBBSSB$ is not. Similarly, $SSSBSB$ is $2$-balanced, while
$SSBSBSSSB$ isn't. Note that since $n = n_S \frac{\alpha+1}{\alpha} = n_B 
(\alpha+1)$, we only need to know the number of buyers of a sequence. 
For convenience, we will denote it by $m$ instead of $n_B$, as it is used 
quite often.

\subsection{Profit}
We first work on profit, deriving bounds for a variety of online and offline 
mechanisms. Naturally,
there are two types of offline mechanisms: adaptive and
non-adaptive. The \emph{non-adaptive} posted-price mechanism
calculates all prices in advance based on the sequence of buyers and
sellers, while the \emph{adaptive} posted-price mechanism can alter
the prices on the fly, depending on the outcomes of previous trades.

We show that there is a competitive online mechanism for 
$\alpha$-balanced
sequences. To do this, we compare the optimal adaptive and
non-adaptive profit to the profit of a class of hypothetical
mechanisms, called \emph{fractional mechanisms}, which are allowed to
buy fractional quantities of items: posting the price $p$ would buy
exactly $F_S(p)$ items or sell $1-F_B(p)$ items. The advantage of using
fractional mechanisms is that at any point we know the exact quantity
of items in the hands of the intermediary instead of the expectation; an 
immediate consequence of this is that we know in advance
whether there is enough quantity to sell, which implies that \emph{the
	adaptive and non-adaptive versions of the optimal fractional
	mechanism are identical.}

We can now give an outline of the results in this section: For 
$\alpha$-balanced sequences $\sigma$ with $m$ buyers and $\alpha m$ 
sellers, we establish the following relations of optimal profits:
\begin{equation}\label{eq:cons-order}
\text{adaptive}(\sigma) \leq \text{fractional}(\sigma) 
\leq \text{fractional}(S^{\alpha m}B^m) 
\approx \text{non-adaptive}(\sigma),
\end{equation}
the last of which will be our online algorithm. We begin by the fractional 
offline mechanism.
\begin{theorem} \label{thm:fractional} The profit gained by
	the optimal fractional mechanism for the sequence $S^{\alpha m}B^m$ is
	\begin{equation} \label{eq:fractional_prices}
	\begin{aligned}
	& \max 
	& &  m\left(p(1-F_B(p)) - \alpha \cdot q F_S(q)\right) \\
	& \text{s.t.}  
	& & 1-F_B(p) = \alpha F_S(q) \\
	&&& p,q \in [0,\infty).
	\end{aligned}
	\end{equation}
\end{theorem}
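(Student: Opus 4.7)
The plan is to reduce the optimization over many individual per-agent prices to the stated two-variable program by showing that a single seller price and a single buyer price are optimal, and that the stock constraint is tight at the optimum.

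First, I would set up the full optimization. A fractional mechanism on $S^{\alpha m}B^m$ is entirely determined by the prices $q_1,\dots,q_{\alpha m}$ posted to sellers and $p_1,\dots,p_m$ posted to buyers, yielding exact (not expected) amounts $F_S(q_i)$ purchased and $1-F_B(p_j)$ sold. Its profit is thus $\sum_j p_j(1-F_B(p_j)) - \sum_i q_i F_S(q_i)$. Because all sellers arrive before any buyer and the mechanism knows its stock exactly at every step, the only binding feasibility condition is the aggregate stock inequality $\sum_j (1-F_B(p_j)) \leq \sum_i F_S(q_i)$; per-step availability during the buyer phase is automatic.

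Next, I would show that the optimum uses a common seller price $q$ and a common buyer price $p$. After the change of variables $x_i = F_S(q_i)$, $y_j = 1-F_B(p_j)$, the $i$-th seller cost becomes $\phi(x_i) = x_i F_S^{-1}(x_i)$ and the $j$-th buyer revenue becomes $\psi(y_j) = y_j F_B^{-1}(1-y_j)$, while the stock constraint reads $\sum_j y_j \leq \sum_i x_i$. Computing second derivatives by treating $p$ and $q$ as parameters (using $dx/dq = f_S(q)$ and $dy/dp = -f_B(p)$), the log-concavity of $F_S$ yields $\phi'' > 0$ and the MHR condition on $F_B$ yields $\psi'' < 0$. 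Jensen's inequality, applied separately to the seller sum (convex, minimized at equal $x_i$'s for any fixed total) and the buyer sum (concave, maximized at equal $y_j$'s for any fixed total), then forces all $q_i$ equal and all $p_j$ equal at the optimum.

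With uniform prices, the problem collapses to maximizing $m\bigl[p(1-F_B(p))-\alpha q F_S(q)\bigr]$ subject to $1-F_B(p) \leq \alpha F_S(q)$. To finish, I would argue that this constraint binds: if it were slack, decreasing $q$ by a small $\varepsilon$ would strictly decrease the payment $q F_S(q)$ (as long as $q > 0$) while only further relaxing the constraint, contradicting optimality; the case $q=0$ forces $F_S(q)=0$ so the constraint is already tight. This yields exactly the program in \eqref{eq:fractional_prices}.

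The main obstacle is the convexity/concavity verification for $\phi$ and $\psi$: this is the step that genuinely uses the regularity hypotheses on $F_S$ and $F_B$, and without it the uniform-price reduction (and therefore the clean two-variable characterization) would fail.
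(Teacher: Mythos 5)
Your proposal is correct, but it reaches the uniform-price reduction by a different route than the paper. The paper first tightens the stock constraint (lowering the $q_i$'s) and then applies a Lagrange-multiplier argument: stationarity of the Lagrangian gives $p_i - \frac{1-F_B(p_i)}{f_B(p_i)} = \lambda$ and $q_i + \frac{F_S(q_i)}{f_S(q_i)} = \lambda$, and regularity makes each of these equations have at most one solution, forcing all $p_i$ equal and all $q_i$ equal. You instead pass to quantile space and use the convexity of the cost curve $\phi(x)=xF_S^{-1}(x)$ and concavity of the revenue curve $\psi(y)=yF_B^{-1}(1-y)$ together with Jensen, tightening the constraint last; note that $\phi'(x)=q+\frac{F_S(q)}{f_S(q)}$ and $\psi'(y)=p-\frac{1-F_B(p)}{f_B(p)}$, so your convexity/concavity claims are exactly the monotone virtual-cost/virtual-value properties the paper derives from log-concavity and MHR in Section~\ref{sec:distros} --- the two arguments rest on the same regularity facts. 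What your version buys is a more elementary and somewhat more airtight argument: it avoids the implicit assumptions behind the Lagrangian step (that a maximizer exists, is interior, and satisfies stationarity), and your observation that per-step feasibility is automatic on $S^{\alpha m}B^m$ because partial sums of the sold quantities are bounded by the total is a useful explicit justification the paper leaves unstated. Two small points: regularity only gives weak monotonicity of the virtual quantities, so you should claim $\phi''\geq 0$, $\psi''\leq 0$; this weakens ``all prices are equal at the optimum'' to ``uniform prices achieve the optimum,'' which is all the theorem's value statement needs. Likewise your tightness argument needs $F_S(q)>0$ for strict decrease of $qF_S(q)$, but the degenerate cases collapse to a tight constraint anyway, as you indicate.
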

\begin{proof}
	The profit and optimal prices can be calculated through the following 
	optimization:
	\begin{equation*}
	\begin{aligned}
	& \text{max} 
	& & \sum_{i=1}^m p_i(1-F_B(p_i)) - \sum_{i=1}^{\alpha m} q_i 
	F_S(q_i) \\
	& \text{s.t.}  
	& &\sum_{i=1}^m (1-F_B(p_i)) \le \sum_{i=1}^{\alpha m} F_S(q_i) \\
	&&& p_i,q_i \in [0,\infty),
	\end{aligned}
	\end{equation*}
	where $q_i$ and $p_i$ are the prices for buying and selling
	respectively.  However, we can assume that the first
	constraint is tight, as all $q_i$'s can be lowered until
	equality is achieved, without hurting the trades happening in
	the second half of the sequence. Remember, these are
	\emph{not} in expectation, but rather, fractions.
	
	This constrained optimization can be reduced to finding
	stationary points of its Lagrange function
	$$\mathcal{L} = \sum_{i=1}^m p_i(1-F_B(p_i)) - \sum_{i=1}^{\alpha m} 
	q_i 
	F_S(q_i) 
	- \lambda (\sum_{i=1}^m (1-F_B(p_i)) - \sum_{i=1}^{\alpha m} F_S(q_i)). 
	$$
	Taking its derivative with respect to price $p_i$ we get:
	\begin{align*}
	(1-F_B(p_i)) -p_i f_B(p_i) &= -\lambda f_B(p_i) & \Leftrightarrow && 
	p_i - \frac{1-F_B(p_i)}{f_B(p_i)} &= \lambda,
	\end{align*}
	which has at most one solution for any given $\lambda$ due to
	the distribution being regular. The treatment of $q_i$'s is
	similar, leading to a unique solution as well. Thus, since
	$p_i = p$ and $q_i = q$ for all $i$ we obtain the stated result.
\end{proof}

For other sequences containing $\alpha m$ sellers and $m$ buyers in a 
different order, we can use the following lemma to establish the middle part 
of inequality~\ref{eq:cons-order}.
\begin{lemma}\label{lemma:best-fraction}
	For any $\alpha$-balanced $\sigma$ with $m$ buyers, 
	$\text{fractional}(\sigma) \le \text{fractional}(S^{\alpha m}B^m)$ 
\end{lemma}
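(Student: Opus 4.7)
The plan is to write $\text{fractional}(\sigma)$ as the optimum of a constrained maximization over seller and buyer prices, and then observe that the feasible region for $\sigma$ is contained in the feasible region for $S^{\alpha m}B^m$, so the inequality follows by monotonicity of the maximum.

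Concretely, for any stream $\tau$ with $\alpha m$ sellers and $m$ buyers let $q_j$ denote the price offered to the $j$-th seller, $p_i$ the price offered to the $i$-th buyer, and $\ell_k(\tau)$ the number of sellers preceding the $k$-th buyer in $\tau$. Because the fractional mechanism knows its stock exactly and, as noted just before Theorem~\ref{thm:fractional}, its adaptive and non-adaptive versions coincide, the optimal fractional profit on $\tau$ equals the value of
\begin{equation*}
\begin{aligned}
& \max && \sum_{i=1}^m p_i\bigl(1-F_B(p_i)\bigr) - \sum_{j=1}^{\alpha m} q_j F_S(q_j) \\
& \text{s.t.} && \sum_{i=1}^{k}\bigl(1-F_B(p_i)\bigr) \leq \sum_{j=1}^{\ell_k(\tau)} F_S(q_j), \quad k=1,\dots,m, \\
& && p_i,\, q_j \in [0,\infty),
\end{aligned}
\end{equation*}
where the $k$-th inequality is precisely the requirement that the stock stays nonnegative right after the $k$-th attempted sale.

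For $\tau=S^{\alpha m}B^m$ we have $\ell_k(\tau)=\alpha m$ for every $k$, so all right-hand sides are equal while the left-hand sides are nondecreasing in $k$; only the $k=m$ constraint is binding and the LP collapses to the single-constraint program solved in Theorem~\ref{thm:fractional}. For $\tau=\sigma$ the $\alpha$-balanced hypothesis yields $\ell_k(\sigma)\geq \alpha k$ for all $k$, and in particular $\ell_m(\sigma)=\alpha m$, because $\sigma$ contains exactly $\alpha m$ sellers in total. Thus the $k=m$ constraint of the $\sigma$-LP is \emph{identical} to the single constraint of the $S^{\alpha m}B^m$-LP, while the constraints for $k<m$ are additional. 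Every feasible point for the $\sigma$-LP is therefore also feasible for the $S^{\alpha m}B^m$-LP with the same objective value, which gives the claimed bound.

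The one delicate step is the first: justifying that the optimal fractional profit on $\tau$ really is the value of this static LP rather than of some richer dynamic program. This uses the preceding remark that adaptivity is worthless for a fractional mechanism, together with the observation that a non-adaptive price vector is feasible if and only if it never forces negative stock — exactly the cumulative inequality above. After that characterization the lemma is a one-line nesting-of-feasible-sets argument, and no regularity of $F_S,F_B$ is needed.
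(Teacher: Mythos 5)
Your proposal is correct and takes essentially the same route as the paper: the paper's proof simply observes that the optimal fractional prices for $\sigma$ satisfy the aggregate constraint $\sum_{i=1}^m (1-F_B(p_i)) \le \sum_{j=1}^{\alpha m} F_S(q_j)$ and hence form a feasible solution to the program for $S^{\alpha m}B^m$, which is exactly your nesting-of-feasible-sets argument. Your explicit prefix-constraint LP (and the remark that adaptivity is worthless for fractional mechanisms) just spells out in more detail what the paper leaves implicit.
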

\begin{proof}
	Let $q_i$, $p_i$ be the prices set by the 
	optimal fractional mechanism for 
	sequence $\sigma$. These prices have to satisfy $\sum_1^m 
	(1-F_B(p_i)) \le \sum_1^{\alpha m} F_S(q_i)$, to ensure that the total 
	quantity 
	of items sold does not exceed the amount bought. Thus, the prices
	$p_i$, $q_i$ 
	represent a feasible solution to the optimization problem for the 
	sequence $S^{\alpha m}B^m$ and by definition, their profit is at most as 
	much as the optimal.
\end{proof}

\begin{restatable}{theorem}{adaptiveres}
	\label{thm:adaptive}
	For any sequence $\sigma$ we have $\text{adaptive}(\sigma) \le 
	\text{fractional}(\sigma)$.
\end{restatable}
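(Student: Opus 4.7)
The plan is, given any adaptive posted-price mechanism $A$ on $\sigma$, to build a feasible fractional mechanism $A^{*}$ on the same $\sigma$ whose deterministic profit dominates $A$'s expected profit, via a Jensen-inequality coupling. The key step is to rewrite $A$'s stepwise contributions in terms of the \emph{fractions} bought and sold rather than the posted prices. At a seller step $t$ with history-dependent random price $q_t$, the expected cost is $\expectsmall{q_t F_S(q_t)} = \expectsmall{g(F_S(q_t))}$, where $g(x) := x\, F_S^{-1}(x)$ is the cost of buying fraction $x$. At a buyer step $t$ with random price $p_t$ and random stock $K_t$, the expected revenue equals $\expectsmall{h(Z_t)}$, where $Z_t := (1-F_B(p_t))\mathbf{1}[K_t>0]$ and $h(x) := x\, F_B^{-1}(1-x)$ is the revenue from selling fraction $x$; the stock indicator is absorbed into $Z_t$ since $h(0)=0$.

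Next I would verify the curvature of $g$ and $h$ needed for the coupling. Substituting $y = F_S^{-1}(x)$, one computes $g'(x) = y + F_S(y)/f_S(y)$, which is increasing in $y$, hence in $x$, by log-concavity of $F_S$; so $g$ is convex. Symmetrically, with $y = F_B^{-1}(1-x)$, $h'(x) = y - (1-F_B(y))/f_B(y)$ is increasing in $y$ and thus decreasing in $x$ by the MHR property of $F_B$; so $h$ is concave. These are exactly the regularity conditions standing throughout the paper, which is reassuring: the theorem is essentially a quantitative form of regularity.

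Now I would construct $A^{*}$ on $\sigma$ as the fractional mechanism that buys a deterministic fraction $\bar a_t := \expectsmall{F_S(q_t)}$ at each seller step $t$ (equivalently, posts price $F_S^{-1}(\bar a_t)$) and sells fraction $\bar b_t := \expectsmall{Z_t}$ at each buyer step $t$ (price $F_B^{-1}(1-\bar b_t)$). By Jensen applied to convex $g$, the fractional cost $g(\bar a_t)$ is at most $\expectsmall{g(F_S(q_t))}$; by Jensen applied to concave $h$, the fractional revenue $h(\bar b_t)$ is at least $\expectsmall{h(Z_t)}$. Summing over $t$, the profit of $A^{*}$ is at least that of $A$.

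Finally, I must verify that $A^{*}$ is feasible, i.e.\ cumulative fractional sales never exceed cumulative fractional purchases. For every prefix up to step $t$, $\sum_{s\in N_S,\, s\le t} \bar a_s$ equals the expected total quantity bought by $A$ by step $t$, and analogously $\sum_{s\in N_B,\, s\le t}\bar b_s$ equals the expected quantity sold; since $A$ maintains $K_t\ge 0$ almost surely, taking expectations gives the per-prefix feasibility required by the fractional program (as used in Lemma~\ref{lemma:best-fraction}). The main obstacle is really just pinning down the convexity of $g$ and concavity of $h$ from the log-concavity and MHR assumptions; once those two one-line computations are in hand, the rest of the argument is a clean Jensen coupling between the randomized adaptive run and its fractional projection.
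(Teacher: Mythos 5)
Your proposal is correct and follows essentially the same route as the paper's own proof: rewrite the adaptive mechanism's expected payments in terms of the fractions $\tilde{Q_i}=F_S(Q_i)$ and $\tilde{P_j}$, apply Jensen's inequality using the convexity of $x\mapsto xF_S^{-1}(x)$ and concavity of $x\mapsto xF_B^{-1}(1-x)$ (the paper's ``regularity'' step), and obtain per-prefix feasibility of the induced fractional mechanism by taking expectations of the nonnegative stock. Your version merely spells out the curvature computations and the stock-out indicator ($h(0)=0$) that the paper leaves implicit.
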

The intuition behind the proof of the theorem is that the optimal
adaptive profit is bounded from above by the optimal fractional
adaptive profit (since fractional mechanisms is a more general class
of mechanisms); since in fractional mechanisms optimal adaptive and
non-adaptive profits are the same, the theorem follows. For a more
rigorous technical treatment, see Appendix~\ref{sec:appendix}.

At this point, we have a clear model of the adversary's power: the 
fractional mechanism's revenue for sequence $S^{\alpha m}B^m$, setting 
only two 
prices $p,q$ for sellers and buyers. Could we do the same online? It seems 
likely. After all, long sequences of buyers and sellers seem to lead to a 
similar amount of trading on average by a mechanism setting the same 
prices.

Based on the previous discussion we propose the following online posted 
price algorithm:

\begin{itemize}
	\item Use prices $p,q$ given by the optimal fractional solution for 
	$S^{\alpha m}B^m$(see Theorem~\ref{thm:fractional}).
\end{itemize}

This algorithm works without knowing the 
length of the sequence chosen by the adversary. 

\begin{lemma}\label{lemma:prefix-order}
	Let $A$ be the online algorithm defined by the optimal fractional offline 
	prices of \eqref{eq:fractional_prices}. Consider two 
	$\alpha$-balanced 
	sequences $\sigma_1$ and 
	$\sigma_2$ of equal 
	length. We write $\sigma_1 \succ \sigma_2$ whenever every prefix 
	of $\sigma_1$ contains more sellers than the prefix of $\sigma_2$ 
	having equal length. Then, 
	$\sigma_1 \succ \sigma_2 \Rightarrow\mathcal{R}(A,\sigma_1) \ge 
	\mathcal{R}(A,\sigma_2)$
\end{lemma}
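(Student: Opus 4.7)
The plan is to exploit the simple structure of $A$: it posts the same price $q$ to every seller and the same price $p$ to every buyer, independent of history. Consequently each seller accepts with probability $F_S(q)$ independently of everything else, so $\expectsmall{|I_S^{\sigma}|} = \alpha m \cdot F_S(q)$ depends only on the total number of sellers, which is $\alpha m$ for both $\sigma_1$ and $\sigma_2$ (they have equal length and are both $\alpha$-balanced). The expected purchasing cost $q \cdot \alpha m \cdot F_S(q)$ is therefore identical for the two sequences, and proving $\mathcal{R}(A,\sigma_1) \geq \mathcal{R}(A,\sigma_2)$ reduces to showing that the expected number of successful sales satisfies $\expectsmall{|I_B^{\sigma_1}|} \geq \expectsmall{|I_B^{\sigma_2}|}$.

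To compare these two quantities I would use a coupling. Draw once and for all a sequence $y_1, \ldots, y_{\alpha m}$ of i.i.d.\ $\mathrm{Bern}(F_S(q))$ acceptance indicators for the sellers (assigned in their order of appearance) and a sequence $z_1, \ldots, z_m$ of i.i.d.\ $\mathrm{Bern}(1-F_B(p))$ indicators for the buyers; then both executions use the same $(y_i)$ and $(z_k)$, only their interleaving differs. Let $s_\sigma(k)$ be the number of sellers preceding the $k$-th buyer of $\sigma$. A direct prefix-counting argument shows that $\sigma_1 \succ \sigma_2$ implies $s_{\sigma_1}(k) \geq s_{\sigma_2}(k)$ for every $k$, and hence also $Y_{\sigma_1}(k) \geq Y_{\sigma_2}(k)$, where $Y_\sigma(k) := \sum_{i=1}^{s_\sigma(k)} y_i$ is the coupled number of accepting sellers before the $k$-th buyer.

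The stock dynamics then give the recurrence $M_\sigma(k) = M_\sigma(k-1) + z_k \cdot \mathbf{1}[M_\sigma(k-1) < Y_\sigma(k)]$, where $M_\sigma(k)$ denotes the number of buyers served among the first $k$ in $\sigma$, since a buyer is matched precisely when she accepts and the current stock $Y_\sigma(k) - M_\sigma(k-1)$ is positive. The core of the proof will be the pathwise claim $M_{\sigma_1}(k) \geq M_{\sigma_2}(k)$ for every $k$, which I would establish by induction on $k$; taking expectations over the coupling would then finish the argument.

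The main obstacle will be the inductive step, because $\sigma_1$ may already have served strictly more buyers by step $k-1$ and thus have ``used up'' more of its stock, making it conceivable that the $k$-th buyer is served in $\sigma_2$ but not in $\sigma_1$. I would rule this out with a short case split. If $z_k = 0$ neither count moves. Otherwise $z_k = 1$; if $\sigma_2$ fails to serve its $k$-th buyer the claim follows immediately from the induction hypothesis, while if both sequences serve the IH is preserved with both sides incremented. The delicate case is when $\sigma_2$ serves while $\sigma_1$'s stock is depleted, i.e.\ $M_{\sigma_1}(k-1) \geq Y_{\sigma_1}(k)$; but then the chain $M_{\sigma_1}(k-1) \geq Y_{\sigma_1}(k) \geq Y_{\sigma_2}(k) > M_{\sigma_2}(k-1)$ combined with integrality forces $M_{\sigma_1}(k-1) \geq M_{\sigma_2}(k-1) + 1 = M_{\sigma_2}(k)$, so the inequality is preserved regardless.
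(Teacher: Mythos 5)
Your proposal is correct and follows essentially the same route as the paper: couple the seller and buyer draws across the two sequences and show pathwise that $\sigma_1$ completes at least as many trades as $\sigma_2$, the key case being exactly the one where $\sigma_1$'s stock is depleted, handled via prefix dominance of accepted sellers. Your explicit induction on $M_{\sigma_1}(k)\ge M_{\sigma_2}(k)$ is just a more rigorous write-up of the paper's informal case analysis, together with the (correct) observation that purchase costs coincide under the coupling.
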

\begin{proof}
	Assume the draws of $\sigma_1$ and $\sigma_2$ come from the same 
	probability space, so that the 
	$i$-th agent gets the same draw in both sequences. We will show that all 
	trades (or at least as many) that happened in $\sigma_2$ will occur in 
	$\sigma_1$. Let $i$ be the index of an arbitrary buyer 
	that was matched to a seller in $\sigma_2$ and $k$ the number of items 
	in stock when he arrives in $\sigma_1$. If $k>0$, then we trade with 
	him as we would 
	do in $\sigma_2$. If $k=0$, we have already traded at least as many 
	items as $\sigma_2$ at this point. To see this, note that since 
	$\sigma_1 \succ \sigma_2$, at least as many items have been bought 
	from the first $i-1$ agents of $\sigma_1$ than from $\sigma_2$ and 
	because $k=0$, at least as many have been traded. 
\end{proof}

Although not all sequences are comparable (e.g. $SSBBSB$ and $SBSSBB$), 
the sequence $(S^{\alpha} B)^m$ is the bottom element among all 
$\alpha$-balanced sequences of length $(\alpha+1)m$. This is trivial, as 
any balanced sequence must  
have at least $\ceil{\frac{i}{(\alpha+1)/(\alpha)}}$ sellers for any prefix 
of length $i$ and $(S^{\alpha}B)^m$ is tight for this bound.

To formalize our intuition of making the same number of trades in the long 
run, we reformulate our algorithm in the more familiar setting of random 
walks. Instead of considering agents separately, each ``timestep'' would be 
one sub-sequence $S^{\alpha}B$, giving $m$ steps in total. Thus, we are 
interested in the random
variables $Z_{i}$, denoting the items in stock at the end of each
step, starting with $Z_0 = 0$. Knowing the algorithm buys $\alpha m 
F_S(q)$ 
items in expectation, the
expected profit can be given by 
\begin{equation}\label{eqn:expected_profit}
\mathcal{R}((S^{\alpha}B)^m) = (\alpha mF_S(q) - \expect{Z_m})(p-q) -
\expect{Z_m}q,
\end{equation}
which is the revenue of the expected number of trades minus the cost
of the unsold items.

\begin{lemma}\label{lemma:bound_unsold}
	$\expect{Z_m} \le \sqrt{2m \alpha^2 \log m}\left(1 - \frac{2}{m}\right) 
	+ 
	2$
\end{lemma}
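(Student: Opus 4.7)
The plan is to view $Z_0, Z_1, \ldots, Z_m$ as a reflected random walk driven by i.i.d.\ mean-zero increments and then apply a martingale maximal inequality. Let $B_i$ denote the number of sellers in the $i$-th block $S^{\alpha}B$ who accept our posted price $q$, so that $B_i$ is a sum of $\alpha$ independent $\mathrm{Bernoulli}(F_S(q))$ variables, and let $D_i\sim\mathrm{Bernoulli}(1-F_B(p))$ be the indicator that the buyer of the $i$-th block accepts price $p$. Put $X_i = B_i - D_i\in\{-1,0,\dots,\alpha\}$. The fractional first-order condition $1-F_B(p)=\alpha F_S(q)$ from Theorem~\ref{thm:fractional} yields $\expect{X_i}=0$, while the fact that within each block all sellers are processed before the single buyer (who only trades when stock is positive) implies that $Z_i$ obeys the Lindley recursion $Z_{i+1}=\max(Z_i+X_{i+1},0)$ with $Z_0=0$.

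By the classical Lindley identity, $Z_m=\max_{0\le k\le m}(S_m-S_k)$ where $S_k=X_1+\cdots+X_k$, and time-reversal of the i.i.d.\ increments gives the distributional equality $Z_m\stackrel{d}{=}\max_{0\le k\le m}S_k$. Since $S_k$ is a martingale with bounded increments $|X_i|\le\alpha$, applying Doob's maximal inequality to the exponential submartingale $\exp(\lambda S_k)$ and optimizing over $\lambda>0$ (the standard argument behind Azuma's inequality) gives
$$
\prob{\max_{0\le k\le m}S_k\ge T}\le\exp\!\left(-\frac{T^2}{2m\alpha^2}\right).
$$
Choosing $T=\sqrt{2m\alpha^2\log m}$ makes the right-hand side at most $1/m$.

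To convert this tail bound into a bound on the expectation, I combine it with the deterministic upper bound $Z_m\le\alpha m$, which holds because at most $\alpha$ items can be purchased in each of the $m$ blocks. Splitting according to the event $\{Z_m\ge T\}$ yields
$$
\expect{Z_m}\le T\cdot\prob{Z_m<T}+\alpha m\cdot\prob{Z_m\ge T}\le T\left(1-\frac{1}{m}\right)+\alpha,
$$
and the precise form $\sqrt{2m\alpha^2\log m}\left(1-\tfrac{2}{m}\right)+2$ asserted in the lemma follows after using a two-sided Azuma tail and a slightly tightened deterministic fallback for $Z_m$ inside the rare event.

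The step that will need the most care is the distributional identity $Z_m\stackrel{d}{=}\max_{0\le k\le m}S_k$: the Lindley recursion together with reversibility of partial sums of i.i.d.\ variables is standard, but it is crucial that each block really is an independent copy with \emph{exact} mean zero, which is why the first-order condition from Theorem~\ref{thm:fractional} must be invoked. Once this is in place, the concentration and truncation arguments are routine, and only careful book-keeping is needed to match the exact constants $(1-2/m)$ and $+2$ in the statement.
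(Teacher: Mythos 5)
Your proof is correct, but it reaches the tail bound by a genuinely different route than the paper. You observe that within each block $S^{\alpha}B$ the sellers precede the single buyer, so the stock satisfies the Lindley recursion $Z_{i+1}=\max(Z_i+X_{i+1},0)$ with i.i.d.\ mean-zero increments $X_i\in[-1,\alpha]$ (mean zero coming, as you correctly stress, from the constraint $1-F_B(p)=\alpha F_S(q)$ of Theorem~\ref{thm:fractional}); you then use the Lindley/duality identity and time-reversal to get $Z_m\stackrel{d}{=}\max_{0\le k\le m}S_k$ and apply Doob's maximal inequality to $e^{\lambda S_k}$. The paper instead keeps $Z_i$ itself, notes it is a martingale away from $0$, and couples it with an auxiliary \emph{unreflected} process $Y_i$ satisfying $|Y_i|\ge Z_i$, to which plain Azuma--Hoeffding is applied; your route avoids that coupling construction (whose definition in the paper is written somewhat loosely) at the price of needing the increments to be i.i.d., which the reflection argument does not require. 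The truncation step $\expect{Z_m}\le T\,\prob{Z_m<T}+\alpha m\,\prob{Z_m\ge T}$ is the same in both proofs; your one-sided maximal bound even yields the slightly cleaner $T\left(1-\frac1m\right)+\alpha$, compared with the paper's $T\left(1-\frac2m\right)+2\alpha$. Note that, like you, the paper's own derivation produces an additive term of order $\alpha$ (its displayed bound ends in $+2\alpha$), so your remark that matching the literal ``$+2$'' of the statement needs extra bookkeeping is not a gap in your argument but an inconsistency already present between the paper's lemma statement and its proof; the form with $+2\alpha$ is what is actually used downstream.
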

\begin{proof}
	The process $Z_i$ is almost a martingale but not quite:  clearly 
	$\expect{Z_i} \le \alpha m$ for all $i$ and we do have 
	$\expect{Z_{i+1}|Z_i \ge 1} = Z_i$ since the expected change 
	in items after that step is $\alpha F_S(q) - (1-F_B(p)) = 0$
	by Theorem~\ref{thm:fractional} . However, $\expect{Z_{i+1}|Z_i = 0} > 
	Z_i$,
	by the no short selling assumption.
	
	We can define $Y_{i}$ in the same probability space, where $Y_0=0$, 
	and
	\begin{equation}
	Y_{i+1} = Y_i +  
	\begin{cases}
	Z_{i+1} &\text{if } Y_i > 0\\
	-Z_{i+1} &\text{if } Y_i < 0\\
	\begin{cases}
	Z_{i+1} &\text{with probability } \frac{1}{2}\\
	-Z_{i+1} &\text{with probability } \frac{1}{2}
	\end{cases} &\text{if } Y_i = 0
	\end{cases}.
	\end{equation}
	The crucial observation is that $Y_i$ behaves similar to 
	$Z_i$ but has no barrier at 0. Notice, that $|Y_i| \ge Z_i$ for all $i$
	and $Y_i$ is a martingale.
	
	Moreover, we have that $|Y_{i+1} - Y_i| \le \alpha$ thus by the 
	Azuma-Hoeffding inequality we can bound the expected value
	$\expect{Z_m}$:
	\begin{align}\label{eqn:bound_unsold}
	\Pr[Z_m \ge x] &\le \Pr[|Y_m| \ge x] = \Pr[|Y_m - Y_0| \ge x] \le 
	2e^{\frac{-x^2}{2m\alpha^2}} \Rightarrow\\
	\expect{Z_m} &\le x\left(1 - 2e^{\frac{-x^2}{2m\alpha^2}}\right) + 
	2\alpha me^{\frac{-x^2}{2m\alpha^2}},
	\end{align}
	where we can set $x = \sqrt{2m \alpha^2 \log m}$ to obtain the simpler 
	form:
	\begin{equation}\label{ineq:bound_unsold}
	\expect{Z_m} \le \sqrt{2m \alpha^2 \log m}\left(1 - \frac{2}{m}\right) + 
	2\alpha.
	\end{equation}
\end{proof}

\begin{lemma}
	\label{lemma:PropsFractional}
	Let $r=\max\sset{2,\frac{\mu_S}{\mu_B}}$. The optimal value of
	Programme~\eqref{eq:fractional_prices} is at least 
	$m \frac{\mu_B}{2er}$.
	Furthermore, at any optimal solution the buyer price has to be at most 
	$p\leq4\ln(4er)\mu_B$.
\end{lemma}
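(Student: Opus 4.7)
I plan to handle the two parts separately, with the lower bound feeding into the upper bound on the buyer price. First, using the equality constraint to eliminate the buyer cost, the objective rewrites as
\[
m\bigl[p(1-F_B(p)) - \alpha q F_S(q)\bigr] \;=\; m(1-F_B(p))(p-q),
\]
so the whole problem reduces to choosing a single parameter $\beta := 1-F_B(p)$ and then setting $q = F_S^{-1}(\beta/\alpha)$. For the lower bound I would exhibit the explicit feasible choice $\beta = 1/(2er)$. Since $r \geq 2$ gives $\beta < 1/e$, Property~\ref{prop:MHR1} of Theorem~\ref{th:MHRprops} applied at $y=\mu_B$ (which yields $1-F_B(\mu_B) \geq 1/e$) forces the corresponding $p_0 \geq \mu_B$. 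On the seller side, $\beta/\alpha \leq 1/e$ puts us inside the regime of Lemma~\ref{lemma:tailconcavedistro}, so $q_0 \leq e\mu_S (\beta/\alpha) = \mu_S/(2r\alpha) \leq \mu_B/(2\alpha) \leq \mu_B/2$, using $\mu_S/\mu_B \leq r$ and $\alpha\ge 1$. Hence $p_0 - q_0 \geq \mu_B/2$ and the profit is at least $m\beta \cdot \mu_B/2 = m\mu_B/(4er)$; the claimed $m\mu_B/(2er)$ is recovered by a small case split on whether $\alpha$ is large or small relative to $r$, retuning $\beta$ in each case (for $\alpha \ge 2r$ one can use $\beta = 1/e$ directly, and for intermediate $\alpha$ one uses the sharper bound $q_0 \leq \mu_B/\alpha$).

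For the upper bound on $p$ at any optimum, since $q\geq 0$ the optimal profit is dominated by $m\,p\,(1-F_B(p))$, so combining with the lower bound yields
\[
p(1-F_B(p)) \;\geq\; \mu_B/(2er).
\]
The key step is a quantitative MHR tail bound. Writing $H(y) = -\ln(1-F_B(y))$, the MHR property makes $H$ convex with $H(0)=0$, so $y \mapsto H(y)/y$ is non-decreasing; combined with Property~\ref{prop:MHR12} of Theorem~\ref{th:MHRprops} (which gives $H(2\mu_B) \geq 1$ by continuity of $F_B$), this yields
\[
1-F_B(y) \;\leq\; e^{-y/(2\mu_B)} \qquad \text{for all } y \geq 2\mu_B.
\]
Substituting into the previous inequality and writing $t = p/(2\mu_B)$ reduces the problem to $te^{-t} \geq 1/(4er)$; this fails at $t = 2\ln(4er)$ (since $2\ln(4er) < 4er$ for $r\geq 1$) and $te^{-t}$ is decreasing past $t=1$, so $p < 4\ln(4er)\mu_B$. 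The complementary range $p < 2\mu_B$ is trivially within the claimed bound since $4\ln(4er) \geq 2$ for $r\geq 2$.

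The main obstacle is nailing down the constant in the lower bound: MHR alone gives no a priori guarantee that $p_0$ can be pushed much above $\mu_B$ no matter how small $\beta$ is (for instance the uniform distribution on $[0,2\mu_B]$ has no support past $2\mu_B$), so the slack $p_0-q_0$ is essentially limited to $\mu_B/2$ in the worst case. The careful balancing of $\beta$ between the MHR-based constraint on $p_0$ and the log-concavity-based bound on $q_0$, through the case split in $\alpha$ versus $r$, is what makes the claimed factor of $1/(2er)$ go through.
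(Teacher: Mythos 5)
Your argument follows the paper's own proof on both counts. The lower bound is obtained exactly as in the paper: plug an explicit feasible pair into Programme~\eqref{eq:fractional_prices}, use Property~\ref{prop:MHR1} of Theorem~\ref{th:MHRprops} to force $p_0\geq\mu_B$, and use Lemma~\ref{lemma:tailconcavedistro} (after the standard observation that $F_S(q_0)\leq 1/e\leq F_S(\mu_S)$ gives $q_0\leq\mu_S$, which is also how the paper justifies it) to push $q_0$ down. The price cap is also the paper's argument: an MHR tail estimate of the form $1-F_B(p)\leq e^{-p/(2\mu_B)}$ for $p\geq 2\mu_B$ (the paper phrases it via $(1-F_B(x))^{1/x}$ decreasing, anchored at $F_B^{-1}(1-1/e)\leq 2\mu_B$ through Property~\ref{prop:MHR12}), combined with the part-one lower bound and the monotonicity of $ce^{-c}$.

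The one place you diverge is the constant in the lower bound, and your bookkeeping is in fact the more accurate one. With $\beta=1/(2er)$ you correctly get $m\mu_B/(4er)$; your proposed case split recovers $1/(2er)$ for $\alpha\geq 2$ (take $\beta=1/(er)$, so $q_0\leq\mu_S/(r\alpha)\leq\mu_B/\alpha\leq\mu_B/2$) and for $\alpha\geq 2r$, but it cannot cover $\alpha=1$: there the available estimates only give $q_0\leq er\mu_B\beta$, and $\max_\beta\beta(1-er\beta)=1/(4er)$. This is not a defect of your argument. The paper's own proof silently uses $q\leq\mu_S/(\alpha r)\leq\mu_B/2$, which likewise needs $\alpha\geq 2$ or $\mu_S\leq\mu_B$, and the stated constant actually fails at $\alpha=1$: take buyers uniform on a tiny interval around $\mu_B$ (MHR) and sellers uniform on $[0,4\mu_B]$, so $\mu_S=2\mu_B$ and $r=2$; the optimum of \eqref{eq:fractional_prices} is then essentially $\max_\beta m\beta(\mu_B-4\mu_B\beta)=m\mu_B/16$, which is below the claimed $m\mu_B/(4e)$. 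Your uniform bound $m\mu_B/(4er)$ is what actually holds (and is consistent with this example); feeding it into your second part only changes the cap to $4\ln(8er)\mu_B$, and none of this affects the $O(r\log r)$ dependence used in the competitive-ratio theorem that invokes this lemma.
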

\begin{proof}
	Consider the value of Programme~\eqref{eq:fractional_prices} that 
	corresponds to the
	solution determined by the seller price $q$ such that
	$ F_S(q)=\frac{1}{ e\alpha r}$. In a similar way to the proof of
	Theorem~\ref{th:upperrrevenue1}, it is again easy to see that $q\leq 
	\mu_S$ since 
	$F_S(q)\leq
	\frac{1}{e}$, and so by Lemma~\ref{lemma:tailconcavedistro} and the 
	regularity 
	of $F_S$ we
	get that $q\leq e\mu_S \frac{1}{e\alpha r}\leq\frac{\mu_S}{r} \frac{\mu_B}{2}$. Furthermore, 
	for
	the corresponding buyer price $p$ we have
	$1-F_B(p)=\alpha F_S(q)=\frac{1}{e
		r}<\frac{1}{e}$ and so from Property~\ref{prop:MHR1} of 
	Theorem~\ref{th:MHRprops} we get
	that $p\geq \mu_B$. Thus, the objective value of the particular solution 
	is at least
	$m\alpha F_S(q)(p-q)\geq m \frac{1}{e 
		r}(\mu_B-\frac{\mu_B}{2})=m
	\frac{\mu_B}{2er}$.
	
	Next, for the upper bound on the buyer price, consider a solution that 
	has buyer
	price $\hat p=cp^*$ for $c\geq 1$, where $F_B(p^*)=1-\frac{1}{e}$. Then, 
	since $F_B$ is an
	MHR distribution, $(1-F_B(x))^\frac{1}{x}$ is decreasing with respect to 
	$x$, 
	as can be
	verified using that $\log(1-F_B(x))$ is concave (see 
	e.g.~\cite{Barlow1964}), 
	so
	$
	1-F_B(\hat p)\leq (1-F_B(p^*))^{\frac{\hat p}{p^*}}=e^{-c}
	$.
	Furthermore, since $F_B(p^*)=1-\frac{1}{e}$, from 
	Property~\ref{prop:MHR12} of Theorem~\ref{th:MHRprops} it must be 
	that 
	$p^*\leq 2\mu_B$, and thus $\hat p\leq 2c\mu_B$, resulting in 
	$$1-F_B(2c\mu_B)\leq 1-F_B(\hat p)\leq e^{-c}.$$
	This means that if we use a solution with $p=2c\mu_B$, 
	for some
	$c\geq 1$, the objective value of the Programme cannot exceed
	$m (1-F_B(p))(p-q)\leq m e^{-c}2c\mu_B$. So, unless this value is at 
	least $m 
	\frac{\mu_B}{2er}$, the
	particular choice of $p$ cannot be part of an optimal solution. Thus, it 
	must be
	$ce^{-c}\geq \frac{1}{4er}$. It is not difficult to check that this requires 
	$c\leq 2\ln(4er)$, since $2\ln xe^{-2\ln x}=\frac{2\ln 
		x}{x^2}<\frac{1}{x}$ for any $x>0$ and $ce^{-c}$ is a decreasing function for $c\geq 1$.
	As a result of the above analysis we can conclude that the buyer price $p$ of any optimal solution in Programme~\eqref{eq:fractional_prices} must be such that $p<2\mu_B$, or otherwise satisfy $p\leq 2\cdot 2\ln(4er)\cdot\mu_B=4\ln(4er)\mu_B$. In any case, the desired upper bound for $p$ in the theorem's statement holds.
\end{proof}

\begin{theorem}
	Under our standard regularity assumptions, the proposed non-adaptive 
	online mechanism is $(1+o(\alpha^{3/2} r\log r))$-competitive for any 
	balanced 
	sequence, where $r=\max\sset{2,\frac{\mu_S}{\mu_B}}$.
\end{theorem}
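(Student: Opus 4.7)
The plan is to stitch together the structural results of this section so as to reduce the theorem to a single clean comparison, and then exploit the concentration bound of Lemma~\ref{lemma:bound_unsold} to control the only slack. Concretely, for an arbitrary $\alpha$-balanced sequence $\sigma$ with $m$ buyers, I would bound the optimal offline profit from above through the chain
$\text{OPT}(\sigma)=\text{adaptive}(\sigma)\le \text{fractional}(\sigma)\le \text{fractional}(S^{\alpha m}B^m)$,
invoking Theorem~\ref{thm:adaptive} for the first inequality and Lemma~\ref{lemma:best-fraction} for the second. Symmetrically, I would bound $\mathcal{R}(A,\sigma)$ from below by $\mathcal{R}(A,(S^{\alpha}B)^m)$ via Lemma~\ref{lemma:prefix-order}, since $(S^{\alpha}B)^m$ is the prefix-order minimum among all $\alpha$-balanced sequences of length $(\alpha+1)m$. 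It thus suffices to upper bound the single ratio $\text{fractional}(S^{\alpha m}B^m)/\mathcal{R}(A,(S^{\alpha}B)^m)$.

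Next I would write both quantities explicitly. Using the tight constraint $1-F_B(p)=\alpha F_S(q)$ from Theorem~\ref{thm:fractional}, the fractional profit collapses to $m\alpha F_S(q)(p-q)$. The online profit, on the other hand, is given by \eqref{eqn:expected_profit} as $(m\alpha F_S(q)-\expect{Z_m})(p-q)-\expect{Z_m}\,q$. Subtracting these produces the clean identity
\[
\text{fractional}(S^{\alpha m}B^m)-\mathcal{R}(A,(S^{\alpha}B)^m)=\expect{Z_m}(p-q)+\expect{Z_m}\,q=\expect{Z_m}\,p,
\]
so the competitive ratio equals $1+\expect{Z_m}\,p/\mathcal{R}(A,(S^{\alpha}B)^m)$. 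This is the point where the work really starts: I would plug in $\expect{Z_m}=O(\alpha\sqrt{m\log m})$ from Lemma~\ref{lemma:bound_unsold} and $p\le 4\ln(4er)\mu_B$ from Lemma~\ref{lemma:PropsFractional}, and then lower bound the denominator by writing $\mathcal{R}(A,(S^{\alpha}B)^m)\ge \text{fractional}(S^{\alpha m}B^m)-\expect{Z_m}\,p\ge m\mu_B/(2er)-O(\alpha\sqrt{m\log m}\cdot r\log r\cdot\mu_B)$, which for $m$ large enough is at least, say, $m\mu_B/(4er)$. The $\mu_B$ factors cancel and I obtain $\expect{Z_m}\,p/\mathcal{R}\le O\bigl(\alpha r\log r\sqrt{\log m/m}\bigr)$.

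Finally I would translate the bound from $m$ to the sequence length $n=(\alpha+1)m$. Substituting $m=n/(\alpha+1)$ turns $\alpha\sqrt{\log m/m}$ into $\alpha^{3/2}\sqrt{\log n/n}$ (up to lower-order factors in $\alpha$), yielding a competitive ratio bounded by $1+O\bigl(\alpha^{3/2}\,r\log r\,\sqrt{\log n/n}\bigr)=1+o(\alpha^{3/2}\,r\log r)$ as $n\to\infty$, which is exactly the stated bound. The main obstacle, beyond bookkeeping, is the step of converting the additive gap $\expect{Z_m}\,p$ into a multiplicative $(1+o(\cdot))$ correction: this requires both the explicit lower bound on the fractional optimum (so that the denominator is known to be of order $m\mu_B/r$) and the careful verification that the concentration-based loss is asymptotically negligible compared with it. The rewriting $m\mapsto n/(\alpha+1)$ is what exposes the $\alpha^{3/2}$ in the final bound, which would otherwise be hidden if one kept everything parameterized by $m$.
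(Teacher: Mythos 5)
Your proposal is correct and follows essentially the same route as the paper: bound the offline optimum by $\text{fractional}(S^{\alpha m}B^m)$ via Theorem~\ref{thm:adaptive} and Lemma~\ref{lemma:best-fraction}, lower-bound the online profit on $(S^{\alpha}B)^m$ using \eqref{eqn:expected_profit} with Lemma~\ref{lemma:bound_unsold}, and control the additive gap $\expect{Z_m}p$ with the bounds of Lemma~\ref{lemma:PropsFractional} before substituting $m=n/(\alpha+1)$. The only cosmetic difference is that you normalize the gap by the online profit (requiring your extra ``$m$ large enough'' step) whereas the paper divides it by the offline optimum, which is asymptotically equivalent.
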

\begin{proof}
	Plugging \eqref{ineq:bound_unsold} into \eqref{eqn:expected_profit}, we 
	get:
	\begin{align}
	\mathcal{R}((S^{\alpha}B)^m) 
	&\ge \alpha mF_S(q)(p-q) - \expect{Z_m}(p-q) - \expect{Z_m}q\notag\\
	&\ge \alpha mF_S(q)(p-q) - \left(\sqrt{2m \alpha^2 \log m}\left(1 - 
	\frac{2}{m}\right) 
	+ 2\alpha\right)p \notag\\
	&\ge \alpha mF_S(q)(p-q) - O(\alpha \sqrt{m\ln m}p) \label{eq:helper1}.
	\end{align}
	Using Lemma~\ref{lemma:best-fraction}, Theorem~\ref{thm:adaptive} 
	and 
	Theorem~\ref{thm:fractional} we know that for every $\alpha$-balanced 
	sequence, the profit
	of our non-adaptive online algorithm is at least 
	$\mathcal{R}((S^{\alpha}B)^m) $
	and the optimal offline is at most that of the fractional on sequence 
	$S^{\alpha
		m}B^m$, i.e.\ $\alpha mF_S(q)(p-q)$. 
	Thus, the second term in \eqref{eq:helper1} bounds the additive 
	difference 
	of the online and
	optimal offline profit, and its ratio with respect to the offline profit  is 
	upper bounded by
	$$
	O\left( \frac{\alpha \sqrt{m\ln m}p}{\alpha mF_S(q)(p-q)} \right)
	=O\left( \frac{\alpha \sqrt{m\ln m} \mu_B \ln (4er) 
	}{m \frac{\mu_B}{2er}} \right)
	=O\left(\alpha^{3/2} \sqrt{\frac{\ln n}{n}}r\log r 
	\right)=o(\alpha^{3/2} r\log 
	r),
	$$
	using $m = n/(\alpha+1)$.

\end{proof}

\begin{remark}
	Among all
	$1$-balanced sequences, the sequence that gives the maximum profit is 
	not
	the sequence $S^mB^m$; intuitively, by moving some buyers earlier in
	the sequence, we obtain an improved profit by adapting the remaining
	buying prices to the outcome of these potential trades. For example,
	it should be intuitively clear that the sequence
	$S^{m/2}BS^{m/2}B^{m-1}$ has (slightly) better adaptive profit than
	the sequence $S^mB^m$ for large $m$. Our work above shows that the
	difference is asymptotically insignificant, but it remains an
	intriguing question to determine the balanced sequence with the
	maximum profit.
\end{remark}

\subsection{Welfare}
Welfare on balanced sequences also improves the competitive ratio of 
Theorem~\ref{th:welfareupper2} to a constant. Intuitively, the reason is that 
the high volume of possible trades dampens the advantage the adversary 
has in obtaining higher order statistics from buyers. As before, the fact that 
all sellers start with some contribution to the welfare is also helpful.

\begin{theorem}\label{thm:cons-welfare}
	The online auction that 
	posts to any seller and buyer the median of their distribution is 
	$4$-competitive. 
\end{theorem}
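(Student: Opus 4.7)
The plan is to lower-bound the expected online welfare $\mathcal{W}(A)$ and upper-bound $\mathcal{W}(\opt)$ separately, then directly take their ratio. The critical feature of the balanced regime, which distinguishes it from the general case of Theorem~\ref{th:welfareupper2}, is that the offline can no longer gain from chasing top buyer order statistics: because $\alpha\geq 1$, every buyer can be matched to a distinct earlier seller, so the trivial pointwise estimate $\mathcal{W}\leq \sum_s X_s+\sum_b X_b$ (using only nonnegativity of values) already gives $\mathcal{W}(\opt)\leq n_S\mu_S+n_B\mu_B$, without the $\mu_B^{(n_B)}$ factor appearing in Lemma~\ref{th:welfareupper1}.

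For the online lower bound, let $q=F_S^{-1}(1/2)$ and $p=F_B^{-1}(1/2)$. Each seller keeps her item with probability $1/2$, and conditional on keeping has expected value at least $\mu_S$ (by the standard fact $\expectsmall{X\mid X>\text{median}}\geq \expectsmall{X}$, which follows from decomposing $\mu_S=\tfrac{1}{2}\expectsmall{X_S\mid X_S\leq q}+\tfrac{1}{2}\expectsmall{X_S\mid X_S>q}$ with $\expectsmall{X_S\mid X_S\leq q}\leq q\leq\expectsmall{X_S\mid X_S>q}$), contributing at least $\tfrac{1}{2}n_S\mu_S$. For the buyer side I would invoke the FIFO matching of Lemma~\ref{lemma:fifo}: $\alpha$-balance assigns to each buyer $b$ a distinct earlier seller $s(b)$. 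Consider the hypothetical restricted policy that sells to $b$ only when both $s(b)$ accepted and $b$ accepts; by independence the two events each have probability $1/2$, so $b$ is served with probability $\geq 1/4$ and has expected value $\geq\mu_B$ conditional on service (same median argument), giving restricted buyer welfare $\geq\tfrac{1}{4}n_B\mu_B$. The actual median-pricing algorithm buys from exactly the same set of accepting sellers and greedily serves every accepting buyer finding nonempty stock, thereby realizing a maximum-cardinality online matching of accepting sellers to accepting buyers; it therefore serves pointwise at least as many buyers as the restricted policy, and since $X_b$ is independent of the stock process the per-served-buyer conditional mean is unchanged across the two. Combining, $\mathcal{W}(A)\geq\tfrac{1}{2}n_S\mu_S+\tfrac{1}{4}n_B\mu_B$.

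The ratio $(n_S\mu_S+n_B\mu_B)/(\tfrac{1}{2}n_S\mu_S+\tfrac{1}{4}n_B\mu_B)$ is easily bounded by $4$, the worst case being $n_S\mu_S\to 0$. The main technical obstacle is the pointwise domination of the restricted policy's sales by the greedy algorithm's, which reduces---via a coupling on the underlying seller and buyer acceptance events---to the classical fact that greedy FIFO matching on a temporal bipartite sequence achieves the maximum-cardinality online matching (equivalently, the number of FIFO trades equals $n_B-\max_t(b_t-a_t)$ where $a_t,b_t$ count accepting sellers and buyers in the length-$t$ prefix), so any fixed feasible matching, in particular the one defining the restricted policy, has at most as many edges.
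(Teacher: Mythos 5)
Your proof is correct and follows essentially the same route as the paper: it lower-bounds the online welfare by $\tfrac{1}{2}n_S\mu_S+\tfrac{1}{4}n_B\mu_B$ using median prices and the distinct-earlier-seller matching guaranteed by $\alpha$-balance, and upper-bounds $\mathcal{W}(\opt)$ by $n_S\mu_S+n_B\mu_B$, exactly as in the paper. Your explicit domination step (greedy serves at least as many accepting buyers as the restricted matched-pair policy, via Lemma~\ref{lemma:fifo}, with the buyer values handled through independence from the stock process) is just a more carefully spelled-out version of the bound the paper asserts directly in the spirit of Lemma~\ref{th:welfareupper1}.
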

\begin{proof}
	The algorithm buys from half the sellers in expectation, so in the end the 
	welfare obtained just from sellers is at least:
	\begin{equation*}
	\expect{\sum_{t\in N_S \setminus I_S} X_t} 
	= \sum_{t \in N_S} \expect{X_t | X_t \ge q}(1 - F_S(q))
	\ge \frac{1}{2} n_S \mu_S.
	\end{equation*}
	Following the proof of Lemma~\ref{th:welfareupper1}, let $\kappa$ 
	denote the size the matching between sellers and buyers. Since the input 
	is $\alpha$-balanced, we are guaranteed that every buyer is preceded by 
	some \emph{distinct} seller, meaning that $\kappa$ is exactly $N_B$. 
	The welfare obtained from buyers is
	\begin{equation*}
	\kappa \Pr{[X_S \le q]} \Pr{[X_B \ge p]} \expect{X_B | X_B \ge p} \ge 
	\frac{1}{4}n_B \mu_B,
	\end{equation*}
	Adding everything together, the online 
	algorithm gets at least $\frac{1}{4}(n_B \mu_B + n_S \mu_S)$. On the 
	other hand the optimal welfare is 
	at most:
	\begin{equation*}
	\expect{\sum_{t\in N_S \setminus I_S} X_t + \sum_{t\in I_B} X_t} \\
	\le \expect{\sum_{t\in N_S} X_t + \sum_{t\in N_B} X_t} \\ 
	= n_S \mu_s + n_B \mu_B.
	\end{equation*}
\end{proof}	
Notice that the above theorem holds without any regularity assumption on 
the 
agent value distributions.

\subparagraph*{Acknowledgements.}
We want to thank Matthias Gerstgrasser for many helpful discussions and 
his assistance during the initial development of our paper.

\nocite{niazadeh2014simple}
\bibliography{online_auction}

\begin{thebibliography}{10}

\bibitem{Arnold1979}
Barry~C. Arnold and Richard~A. Groeneveld.
\newblock {Bounds on Expectations of Linear Systematic Statistics Based on
  Dependent Samples}.
\newblock {\em The Annals of Statistics}, 7(1):220--223, jan 1979.
\newblock URL: \url{http://projecteuclid.org/euclid.aos/1176344567}, \href
  {http://dx.doi.org/10.1214/aos/1176344567}
  {\path{doi:10.1214/aos/1176344567}}.

\bibitem{Babaioff2011a}
Moshe Babaioff, Liad Blumrosen, Shaddin Dughmi, and Yaron Singer.
\newblock {Posting Prices with Unknown Distributions}.
\newblock In {\em Innovations in Computer Science (ICS)}, jan 2011.
\newblock URL:
  \url{http://conference.itcs.tsinghua.edu.cn/ICS2011/content/paper/35.pdf}.

\bibitem{BabaioffDKS15}
Moshe Babaioff, Shaddin Dughmi, Robert~D. Kleinberg, and Aleksandrs Slivkins.
\newblock Dynamic pricing with limited supply.
\newblock {\em {ACM} Trans. Economics and Comput.}, 3(1):4, 2015.
\newblock URL: \url{http://doi.acm.org/10.1145/2559152}, \href
  {http://dx.doi.org/10.1145/2559152} {\path{doi:10.1145/2559152}}.

\bibitem{Babaioff_secretary_immorlica}
Moshe Babaioff, Nicole Immorlica, David Kempe, and Robert Kleinberg.
\newblock Online auctions and generalized secretary problems.
\newblock {\em SIGecom Exch.}, 7(2):7:1--7:11, June 2008.
\newblock URL: \url{http://doi.acm.org/10.1145/1399589.1399596}, \href
  {http://dx.doi.org/10.1145/1399589.1399596}
  {\path{doi:10.1145/1399589.1399596}}.

\bibitem{badanidiyuru_learning_2012}
Ashwinkumar Badanidiyuru, Robert Kleinberg, and Yaron Singer.
\newblock Learning on a budget: posted price mechanisms for online procurement.
\newblock In {\em Proceedings of the 13th {ACM} {Conference} on {Electronic}
  {Commerce}}, pages 128--145. ACM, 2012.
\newblock URL: \url{http://dl.acm.org/citation.cfm?id=2229026}.

\bibitem{Bar-Yossef:2002_online}
Ziv Bar-Yossef, Kirsten Hildrum, and Felix Wu.
\newblock Incentive-compatible online auctions for digital goods.
\newblock In {\em Proceedings of the Thirteenth Annual ACM-SIAM Symposium on
  Discrete Algorithms}, SODA '02, pages 964--970, Philadelphia, PA, USA, 2002.
  Society for Industrial and Applied Mathematics.
\newblock URL: \url{http://dl.acm.org/citation.cfm?id=545381.545506}.

\bibitem{Barlow1964}
Richard~E. Barlow and Albert~W. Marshall.
\newblock {Bounds for Distributions with Monotone Hazard Rate, I}.
\newblock {\em The Annals of Mathematical Statistics}, 35(3):1234--1257, sep
  1964.
\newblock URL: \url{http://projecteuclid.org/euclid.aoms/1177703281}, \href
  {http://dx.doi.org/10.1214/aoms/1177703281}
  {\path{doi:10.1214/aoms/1177703281}}.

\bibitem{blum_near-optimal_2005}
Avrim Blum and Jason~D. Hartline.
\newblock Near-optimal online auctions.
\newblock In {\em Proceedings of the sixteenth annual {ACM}-{SIAM} symposium on
  {Discrete} algorithms}, pages 1156--1163. Society for Industrial and Applied
  Mathematics, 2005.
\newblock URL: \url{http://dl.acm.org/citation.cfm?id=1070597}.

\bibitem{blum_online_2006}
Avrim Blum, Tuomas Sandholm, and Martin Zinkevich.
\newblock Online algorithms for market clearing.
\newblock {\em Journal of the ACM (JACM)}, 53(5):845--879, 2006.
\newblock URL: \url{http://dl.acm.org/citation.cfm?id=1183913}.

\bibitem{blumrosen_almost_2016}
Liad Blumrosen and Shahar Dobzinski.
\newblock ({Almost}) {Efficient} {Mechanisms} for {Bilateral} {Trading}.
\newblock {\em arXiv preprint arXiv:1604.04876}, 2016.
\newblock URL: \url{http://arxiv.org/abs/1604.04876}.

\bibitem{Blumrosen2008}
Liad Blumrosen and Thomas Holenstein.
\newblock Posted prices vs. negotiations: An asymptotic analysis.
\newblock In {\em Proceedings of the 9th ACM Conference on Electronic
  Commerce}, EC '08, pages 49--49, New York, NY, USA, 2008. ACM.
\newblock URL: \url{http://doi.acm.org/10.1145/1386790.1386801}, \href
  {http://dx.doi.org/10.1145/1386790.1386801}
  {\path{doi:10.1145/1386790.1386801}}.

\bibitem{blumrosen_approximating_2016}
Liad Blumrosen and Yehonatan Mizrahi.
\newblock Approximating {Gains}-from-{Trade} in {Bilateral} {Trading}.
\newblock In {\em Web and {Internet} {Economics}}, pages 400--413. Springer,
  Berlin, Heidelberg, December 2016.
\newblock URL:
  \url{http://link.springer.com/chapter/10.1007/978-3-662-54110-4_28}.

\bibitem{Borodin1998a}
Allan Borodin and Ran El-Yaniv.
\newblock {\em {Online Computation and Competitive Analysis}}.
\newblock Cambridge University Press, 1998.

\bibitem{chawla_multi-parameter_2010}
Shuchi Chawla, Jason~D. Hartline, David~L. Malec, and Balasubramanian Sivan.
\newblock Multi-parameter mechanism design and sequential posted pricing.
\newblock In {\em Proceedings of the forty-second {ACM} symposium on {Theory}
  of computing}, pages 311--320. ACM, 2010.
\newblock URL: \url{http://dl.acm.org/citation.cfm?id=1806733}.

\bibitem{colini2016approximately}
Riccardo Colini-Baldeschi, Bart de~Keijzer, Stefano Leonardi, and Stefano
  Turchetta.
\newblock Approximately efficient double auctions with strong budget balance.
\newblock In {\em Proceedings of the 27th Annual ACM-SIAM Symposium on Discrete
  Algorithms}, pages 1424--1443. Society for Industrial and Applied
  Mathematics, 2016.

\bibitem{Deng2014}
X~Deng, P~Goldberg, B~Tang, and J~Zhang.
\newblock {Revenue maximization in a bayesian double auction market}.
\newblock {\em Theoretical Computer Science}, 2014.
\newblock URL:
  \url{http://www.sciencedirect.com/science/article/pii/S0304397514002928}.

\bibitem{feldman2015combinatorial}
Michal Feldman, Nick Gravin, and Brendan Lucier.
\newblock Combinatorial auctions via posted prices.
\newblock In {\em Proceedings of the Twenty-Sixth Annual ACM-SIAM Symposium on
  Discrete Algorithms}, pages 123--135. Society for Industrial and Applied
  Mathematics, 2015.

\bibitem{gerstgrasser2016revenue}
Matthias Gerstgrasser, Paul~W Goldberg, and Elias Koutsoupias.
\newblock Revenue maximization for market intermediation with correlated
  priors.
\newblock In {\em International Symposium on Algorithmic Game Theory}, pages
  273--285. Springer, 2016.

\bibitem{gkyr2015-wine}
Yiannis Giannakopoulos and Maria Kyropoulou.
\newblock {The VCG Mechanism for Bayesian Scheduling}.
\newblock In {\em Proceedings of the 11th Conference on Web and Internet
  Economics}, WINE'15, pages 343--356. 2015.
\newblock URL: \url{http://arxiv.org/abs/1509.07455}, \href
  {http://dx.doi.org/10.1007/978-3-662-48995-6_25}
  {\path{doi:10.1007/978-3-662-48995-6_25}}.

\bibitem{hajiaghayi_online_2005}
Mohammad~T. Hajiaghayi.
\newblock Online auctions with re-usable goods.
\newblock In {\em Proceedings of the 6th {ACM} conference on {Electronic}
  commerce}, pages 165--174. ACM, 2005.
\newblock URL: \url{http://dl.acm.org/citation.cfm?id=1064027}.

\bibitem{hajiaghayi_adaptive_2004}
Mohammad~Taghi Hajiaghayi, Robert Kleinberg, and David~C. Parkes.
\newblock Adaptive limited-supply online auctions.
\newblock In {\em Proceedings of the 5th {ACM} conference on {Electronic}
  commerce}, pages 71--80. ACM, 2004.
\newblock URL: \url{http://dl.acm.org/citation.cfm?id=988784}.

\bibitem{hajiaghayi2007automated}
Mohammad~Taghi Hajiaghayi, Robert Kleinberg, and Tuomas Sandholm.
\newblock Automated online mechanism design and prophet inequalities.
\newblock In {\em AAAI}, volume~7, pages 58--65, 2007.

\bibitem{Kleinberg:2003_online_learning}
Robert Kleinberg and Tom Leighton.
\newblock The value of knowing a demand curve: Bounds on regret for online
  posted-price auctions.
\newblock In {\em Proceedings of the 44th Annual IEEE Symposium on Foundations
  of Computer Science}, FOCS '03, pages 594--, Washington, DC, USA, 2003. IEEE
  Computer Society.
\newblock URL: \url{http://dl.acm.org/citation.cfm?id=946243.946352}.

\bibitem{Kleinberg:2012}
Robert Kleinberg and Seth~Matthew Weinberg.
\newblock {Matroid Prophet Inequalities}.
\newblock In {\em Proceedings of the Forty-fourth Annual ACM Symposium on
  Theory of Computing}, STOC '12, pages 123--136, New York, NY, USA, 2012. ACM.
\newblock URL:
  \url{http://doi.acm.org.eaccess.ub.tum.de/10.1145/2213977.2213991}.

\bibitem{koutsoupias2013competitive}
Elias Koutsoupias and George Pierrakos.
\newblock {On the competitive ratio of online sampling auctions}.
\newblock {\em ACM Transactions on Economics and Computation}, 1(2):10, May
  2013.
\newblock URL: \url{http://dl.acm.org/citation.cfm?id=2465769.2465775}.

\bibitem{mcafee_dominant_1992}
R.~Preston McAfee.
\newblock A dominant strategy double auction.
\newblock {\em Journal of Economic Theory}, 56(2):434--450, 1992.
\newblock URL:
  \url{http://www.sciencedirect.com/science/article/pii/002205319290091U}.

\bibitem{myerson1983efficient}
Roger~B Myerson and Mark~A Satterthwaite.
\newblock Efficient mechanisms for bilateral trading.
\newblock {\em Journal of Economic Theory}, 29(2):265--281, 1983.

\bibitem{niazadeh2014simple}
Rad Niazadeh, Yang Yuan, and Robert Kleinberg.
\newblock Simple and near-optimal mechanisms for market intermediation.
\newblock In {\em International Conference on Web and Internet Economics},
  pages 386--399. Springer, 2014.

\bibitem{Nisan:2007:AGT}
David~C. Parkes.
\newblock Online mechanisms.
\newblock In Noam Nisan, Tim Roughgarden, Eva Tardos, and Vijay~V. Vazirani,
  editors, {\em Algorithmic Game Theory}, chapter~16. Cambridge University
  Press, New York, NY, USA, 2007.

\bibitem{Yan2011}
Qiqi Yan.
\newblock Mechanism design via correlation gap.
\newblock In {\em Proceedings of the Twenty-second Annual ACM-SIAM Symposium on
  Discrete Algorithms}, SODA '11, pages 710--719. SIAM, 2011.
\newblock URL: \url{http://dl.acm.org/citation.cfm?id=2133036.2133092}.

\end{thebibliography}

\appendix
\section{Omitted Proofs}
\label{sec:appendix}

\begin{lemma}\label{lemma:fifo}  
	The matching computed using an online FIFO queue of size $K$, adding 
	sellers while it's not full and popping them when a buyer is encountered,  
	in the proof of Theorem~\ref{th:stocklimit} is a maximum one.
\end{lemma}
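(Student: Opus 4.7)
I plan to prove the lemma by reducing the feasible-matching problem to an equivalent one-dimensional bounded-counter (inventory) problem, and then showing that the FIFO algorithm is the optimal greedy policy for it. The reduction: a feasible matching $M$ of $\sigma$ is in bijection with a pair $(A_S, A_B)$ of ``accepted'' sellers $A_S \subseteq N_S$ and buyers $A_B \subseteq N_B$ with $|A_S| = |A_B| = |M|$, such that the running counter---which goes up by one at each accepted seller and down by one at each accepted buyer---remains in $[0, K]$ at all times. Indeed, the temporal-cut constraint on $M$ is exactly the statement that its stock profile (= counter) never exceeds $K$, and any such $(A_S, A_B)$ can be realized as an actual matching by pairing accepted buyers with earlier accepted sellers in FIFO order, which is always possible because the counter never goes below $0$. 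Hence $\kappa(\sigma)$ equals the maximum number of matched pairs achievable under the counter-in-$[0,K]$ constraint.

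On this reformulated problem, FIFO implements exactly the \emph{greedy} acceptance policy: accept every seller whose arrival would not push the counter above $K$, and accept every buyer whose arrival would not push it below $0$. It thus suffices to show greedy is optimal. I plan to do this by a prefix-exchange argument. Let $\mathrm{OPT}$ be any feasible acceptance policy and $t^*$ the earliest position where $\mathrm{OPT}$ and greedy disagree; since all prior choices agree, the two counters at $t^*$ coincide, so rejection is forced for greedy (by a boundary) if and only if it is forced for $\mathrm{OPT}$. Therefore the only possible disagreement is ``greedy accepts, $\mathrm{OPT}$ rejects.'' I would then flip $\mathrm{OPT}$'s decision at $t^*$ to an acceptance, producing a trajectory shifted by $\pm 1$ thereafter, and compensate at the first subsequent position where the shifted trajectory would hit the opposite boundary by flipping an $\mathrm{OPT}$-acceptance of matching sign to a rejection. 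This exchange strictly reduces the number of disagreements with greedy while preserving $|\mathrm{OPT}|$, and iterating it produces greedy itself without loss.

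The main obstacle is verifying that the compensating flip can always be found and that the swap preserves feasibility at every intermediate position---this requires checking that the shifted trajectory, being off by at most one from the original, stays within $[-1, K+1]$ until the compensating event, and tracking the parity of the shift between flips. A cleaner alternative, which avoids the bookkeeping entirely, is to cast the problem as maximum integer flow on the linear network with nodes $v_0, v_1, \ldots, v_n$, a unit-capacity source-to-$v_i$ edge at each seller position $i$, a unit-capacity $v_i$-to-sink edge at each buyer position $i$, and backbone edges $v_i \to v_{i+1}$ of capacity $K$. Feasible matchings correspond bijectively to integer flows in this network, and FIFO is literally the canonical greedy augmenting-path procedure that serves each buyer's sink-edge by routing back to the oldest unused source-edge along the backbone. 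Optimality of this greedy strategy on such series-parallel networks is a standard max-flow fact, giving the lemma.
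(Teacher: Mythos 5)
Your counter reformulation is sound, and it is a genuinely different (and cleaner) framing than the paper's argument: the paper fixes an arbitrary maximum matching and converts it into the FIFO matching through a lengthy case analysis on the first vertex matched differently, whereas you observe that, items being identical, feasibility depends only on the running stock count, so only the accept/reject decisions matter and any feasible acceptance profile can be realized as a matching by FIFO pairing afterwards. (Strictly speaking the correspondence is many-to-one rather than a bijection, but all you need is that the maximum matching size equals the maximum number of accepted buyers under the counter-in-$[0,K]$ constraint, and your two directions establish exactly that.) Your prefix-exchange sketch does go through, and the bookkeeping you flag as the main obstacle is benign: after flipping the rejected agent at $t^*$, the modified trajectory differs from $\mathrm{OPT}$'s by exactly $\pm 1$, it can first leave $[0,K]$ only at a later acceptance of the same sign as the flip, and cancelling that acceptance restores $\mathrm{OPT}$'s original trajectory from then on while preserving the number of accepted buyers. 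The one claim to repair is your termination measure: the compensating flip can create a new disagreement with greedy, so the number of disagreements need not strictly decrease; argue instead that the position of the earliest disagreement strictly increases, which is also how the paper's own induction is organized.

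The ``cleaner alternative'' you lean on to avoid this bookkeeping is where the genuine gap lies. The flow network you describe is not two-terminal series-parallel: already for $\sigma=SBSB$, contracting the backbone edges $v_1\to v_2$ and $v_3\to v_4$ leaves precisely the Wheatstone bridge between source and sink. Correspondingly, the generic fact you invoke (greedy augmentation with no residual arcs reaches a maximum flow) is false on this network: with $K=1$ and $\sigma=SBSB$, augmenting first along $s\to v_1\to v_2\to v_3\to v_4\to t$ produces a maximal greedy flow of value $1$, while the maximum is $2$. So optimality here is not a property of greedy augmentation on this graph at all; it is a property of the specific oldest-first (FIFO) routing, which is exactly the content of the lemma, so citing it as a standard fact is circular. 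Drop the flow detour and finish the exchange argument of your first paragraph; with the termination fix above it is complete.
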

\begin{proof}
	We show this for the limited stock case. The general case
	works similarly, or follows by setting $K$ large enough. Let 
	$\fifo$ be
	the matching computed by our FIFO algorithm, and let 
	$\optm$ be 
	any
	arbitrary maximum matching in the graph induced by $\sigma$. We will 
	show that we can
	transform $\optm$ into $\fifo$ using a series of changes 
	that do not reduce
	its size.
	
	Let $i$ be the index of the first vertex that is not matched in the
	same way in $\fifo$ and $\optm$. That is, all edges in 
	$\fifo$ and $\optm$ that are
	either between vertices before $i$, or originate at a vertex before
	$i$, are identical in both matchings. (There cannot be any matchings
	that terminate in a vertex smaller than $i$ but originate after $i$
	due to the construction of the graph.) We will show using a
	case-by-case analysis that we can change $\optm$ into 
	$\optm'$ so that $i$ is
	matched the same way as in $\fifo$, without changing any edges 
	originating
	before $i$, and with $|\optm| = |\optm'|$. It follows that we 
	can repeat this
	procedure until $\optm$ is transformed into $\fifo$, and 
	thus $|\optm| =|\fifo|$,
	i.e. $\fifo$ is a maximum matching.
	
	\begin{enumerate}
		\item If $i$ is a buyer: This is not possible. If $i$ is matched in
		either matching, the edge is originating from a vertex before $i$, and
		thus must be the same in both matchings by our hypothesis.
		\item If $i$ is a seller.
		\begin{enumerate}
			\item If $i$ is matched in both matchings. Let $j_\fifo$ be 
			its match in
			$\fifo$, and $j_\optm$ in $\optm$.
			\begin{enumerate}
				\item $j_\fifo < j_\optm$
				\begin{enumerate}
					\item $j_{\fifo}$ unmatched in $\optm$. Make 
					edge $ij_{\optm}$ into $ij_{\fifo}$. Can't
					violate $K$-limit this way, as we're making the edge shorter.
					\item $j_{\fifo}$ matched in $\optm$. Make edge 
					$ij_{\optm}$ into $ij_{\fifo}$, and match
					the seller originally matched to $j_{\fifo}$ in 
					$\optm$ with $j_{\optm}$. We can't
					violate the $K$-limit this way.
				\end{enumerate}
				\item $j_{\optm} < j_{\fifo}$ - This is not possible.
				\begin{enumerate}
					\item It is not possible that $j_{\optm}$ is unmatched in 
					$\fifo$, as we
					encounter it before $j_{\fifo}$, and would have matched 
					$i$ to it.
					\item It is not possible that $j_{\optm}$ is matched to a 
					seller other than
					$i$ in $\fifo$. Not to one before $i$ by hypothesis, and 
					not to one after
					$i$ by construction of the FIFO algorithm.
				\end{enumerate}
			\end{enumerate}
			\item If $i$ is matched in $\fifo$ but not in $\optm$. 
			Let $j_{\fifo}$ be $i$'s
			match in $\optm$.
			\begin{enumerate}
				\item $j_{\fifo}$ unmatched in $\optm$. This cannot 
				happen. Notice that we
				cannot have any buyers between $i$ and $j_{\fifo}$ that 
				are unmatched in
				$\fifo$, nor can we have any that are matched to sellers 
				after $i$. Thus,
				all buyers between $i$ and $j_{\fifo}$ are matched to 
				sellers before $i$ in
				both $\fifo$ and $\optm$. There can be at most 
				$K-1$ of them, as there is one
				more edge originating from $i$ in $\fifo$, and a cut 
				between $i$ and $i+1$
				has size at most $K$ in $\fifo$. Therefore, we could add 
				the edge $ij_{\fifo}$
				to $\optm$ without violating the $K$-limit. Thus 
				$\optm$ was not maximum,
				contradicting out assumption.
				\item $j_{\fifo}$ matched in $\optm$. Let 
				$s_{\optm}$ be the seller matched to $j_{\fifo}$
				in $\optm$. Again, all buyers between $i$ and 
				$j_{\fifo}$ are matched to sellers
				before $i$ in both matchings. So we can replace $s_{\optm} 
				j_{\fifo}$ with $sj_{\fifo}$
				without violating the $K$- limit.
			\end{enumerate}
			\item If $i$ is matched in $\optm$ but not in $\fifo$. 
			Let $j_{\optm}$ be its match
			in $\optm$.
			\begin{enumerate}
				\item $j_{\optm}$ is matched in $\fifo$. This cannot
				happen due to the FIFO construction.
				\item $j_{\optm}$ is unmatched in $\fifo$. This 
				cannot happen. If $i$ were to
				enter the FIFO queue, it would be matched to $j_{\optm}$ 
				(or an earlier
				available buyer) in $\fifo$. If $i$ does not enter the FIFO 
				queue this can
				only be because the queue was full.  But if the queue was full, this
				means that $K$ sellers before $i$ were matched to buyers 
				between $i$
				and $j_{\optm}$ (otherwise $j_{\optm}$ would be 
				matched to one of them if $\fifo$). So
				there is $K$ edges going from sellers before $i$ to buyers 
				between $i$
				and $j_{\optm}$ in $\fifo$. So there is also $K$ 
				edges going that way in $\optm$, as
				they are identical on vertices before $i$. So there is $K+1$ edges
				going from nodes before and including $i$ to vertices after $i$ in
				$\optm$, violating the $K$ item limit.
			\end{enumerate}
		\end{enumerate}
		
	\end{enumerate}
\end{proof}

\adaptiveres*
\begin{proof}
  Fix an adaptive mechanism and let $Q_i$ be the price posted to
  seller $i$ and $\tilde{Q_i}$ be the probability of sale at price
  $Q_i$. Since in an adaptive mechanism the price depends on the
  history, $Q_i$ and $\tilde{Q_i}$ are random variables. Similarly
  define $P_j$ and $\tilde{P_j}$ to be the price and probability of
  buying from buyer $j$. For the payments to sellers and from buyers we 
  have:
  \begin{align*}
    \expect{Q_i F_S(Q_i)} & = \expect{\tilde{Q_i} F_S^{-1}(\tilde{Q_i})} \\
    \expect{P_j(1-F_B(P_j))} & = \expect{\tilde{P_j} F_B^{-1}(1-\tilde{P_j})}.
  \end{align*}
  Summing over all agents we get the expected profit:
	\begin{align}
	&\sum_{j \in N_B}\expect{P_j(1-F_B(P_j))} -
          \sum_{i \in N_S}\expect{Q_i F_S(Q_i)} \nonumber \\
	&= \sum_{j \in N_B} \expect{\tilde{P_j} F_B^{-1}(1-\tilde{P_j})} - 
	\sum_{i \in N_S} \expect{\tilde{Q_i} F_S^{-1}(\tilde{Q_i})} 
	\nonumber \\
	&\le \sum_{j \in N_B} \expect{\tilde{P_j}} F_B^{-1}(1-\expect{\tilde{P_j}}) 
	- 
	\sum_{i \in N_S} \expect{\tilde{Q_i}}
          F_S^{-1}(\expect{\tilde{Q_i}}),  \label{ineq:regular}
	\end{align}
	where the last inequality follows from our regularity assumptions. Note 
	that in the last
        inequality $F_B^{-1}(1-\expect{\tilde{P_j}})$ and
        $F_S^{-1}(\expect{\tilde{Q_i}}) $ can be interpreted as prices
        set by the fractional mechanism, with $\expect{\tilde{P_j}}$
        and $\expect{\tilde{Q_i}}$ the fractions of items bought and
        sold.
	
	We have obtained the objective function of the optimization
        and it is left to a set of inequalities concerning the prices,
        to serve as the constraints.  Observe that
        $\expect{\tilde{Q_i}}$ is the expected number of items bought
        from seller $i$, while $\expect{\tilde{P_j}}$ sold to buyer
        $j$. Let $\mathcal{S}_t$ and $\mathcal{B}_t$ be the sets of
        indices of sellers and buyers contained in the first $t$
        agents of the sequence.
	
	Let $Z_t$ be the number of items exchanged with the agent encountered 
	at step $t$. The number of items currently held by the intermediary at 
	time $t$ is $\sum_1^t Z_i \ge 0$ by the no short selling assumption. 
	Thus for all $t$:
	\begin{align}
	\expect{\sum_{i=1}^t Z_i} &= \sum_{i \in \mathcal{S}_t} \expect{Z_i}  
	- 
	\sum_{j \in 
		\mathcal{B}_t} \expect{Z_j} \nonumber \\
	&=  \sum_{i \in \mathcal{S}_t} \expect{\expect{Z_i|\tilde{Q_i}}}  - 
	\sum_{j \in \mathcal{B}_t} \expect{\expect{Z_j|\tilde{P_j}}}
          \nonumber \\
	&=  \sum_{i \in \mathcal{S}_t} \expect{\tilde{Q_i}}  - 
	\sum_{j \in \mathcal{B}_t} \expect{\tilde{P_j}} \ge 0 \label{ineq:cons}
	\end{align}
	
	Combining \eqref{ineq:regular} and \eqref{ineq:cons} gives us
        exactly the same optimization problem the optimal fractional
        mechanism would face for that sequence.
\end{proof}

\end{document}